\documentclass[letterpaper, 10pt, conference]{ieeeconf}
\IEEEoverridecommandlockouts
\usepackage{amssymb}
\usepackage{cuted}
\usepackage{amsmath,amsfonts}
\usepackage{algorithm}
\usepackage{algpseudocode}
\usepackage{cite}
\usepackage{graphicx}
\usepackage{subcaption}
\usepackage{hyperref}
\usepackage{xurl}
\usepackage{amssymb}
\usepackage{mathrsfs}
\usepackage{array}
\usepackage{mathabx}
\usepackage{xcolor}
\usepackage{mathtools}

\usepackage{amsthm}
\newtheorem{theorem}{Theorem}
\newtheorem{definition}{Definition}
\newtheorem{lemma}{Lemma}

\newtheorem{corollary}{Corollary}

\newtheorem{problem}{Problem}

\newcommand{\R}{\mathbb{R}}
\newcommand{\N}{\mathbb{N}}
\newcommand{\C}{\mathbb{C}}

\newcommand{\rank}{\textup{rank}}

\newcommand{\Pcal}{\mathcal{P}}
\newcommand{\Acal}{\mathcal{A}}

\newcommand{\Bcal}{\mathcal{B}}
\newcommand{\Ccal}{\mathcal{C}}
\newcommand{\Dcal}{\mathcal{D}}

\newcommand{\tbf}{\textbf}
\newcommand{\tA}{\tilde{A}}
\newcommand{\tB}{\tilde{B}}
\newcommand{\tC}{\tilde{C}}
\newcommand{\tD}{\tilde{D}}
\newcommand{\up}{\textup}
\newcommand{\mbf}{\mathbf}

\newcommand{\StateWrap}[1]{
\State \parbox[t]{\dimexpr\linewidth-\algorithmicindent*15/10}{#1}%
}

\DeclarePairedDelimiter{\abs}{\lvert}{\rvert}
\DeclarePairedDelimiter{\abrac}{\langle}{\rangle}

\algtext*{EndIf}
\algtext*{EndFor}
\algtext*{EndWhile}

\title{\LARGE \bf Sparse State-Space Realizations of Linear Controllers}

\author{Yaozhi Du, Jing Shuang (Lisa) Li
\thanks{The authors are with the Department of Electrical Engineering and Computer Science, University of Michigan.  
 {\tt\small \{yzdu, jslisali\}@umich.edu}}
}
  
\begin{document}

\maketitle

\begin{abstract}

This paper provides a novel approach for finding sparse state-space realizations of linear systems (e.g., controllers). Sparse controllers are commonly used in distributed control, where a controller is synthesized with some sparsity penalty. Here, motivated by a modeling problem in sensorimotor neuroscience, we study a complementary question: given a linear time-invariant system (e.g., controller) in transfer function form and a desired sparsity pattern, can we find a suitably sparse state-space realization for the transfer function? This problem is highly nonconvex, but we propose an exact method to solve it. We show that the problem reduces to finding an appropriate similarity transform from the modal realization, which in turn reduces to solving a system of multivariate polynomial equations. Finally, we leverage tools from algebraic geometry (namely, the Gr\"obner basis) to solve this problem exactly. We provide algorithms to find real- and complex-valued sparse realizations and demonstrate their efficacy on several examples.


\end{abstract}

\section{Introduction}

\label{sec:intro}

State-space realization is the process of constructing a state-space model for a given linear time-invariant (LTI) system in transfer function form. A sparse state-space realization consists of sparse system matrices. 
Sparse realizations are advantageous in engineering. To regulate dynamics of large systems such as a power grid, applying centralized controllers leads to high computation and communication burden; this can be mitigated by distributed (i.e., sparse) controllers \cite{sabuau2023network, wenk2003parameter}. 

A less conventional application of sparse realizations is in sensorimotor modeling. Control theory is frequently applied to explain the behavior of animals; that is, the input-output sensorimotor behavior of animals can be modeled using a controller such as a transfer function \cite{li2025toward}. It follows that such a controller is implemented by the nervous system. Since networks of neurons can be thought of as dynamical systems, the problem of translating behavior (i.e., transfer function) to neural implementation (i.e., network dynamics) can also be thought of as a realization problem. Furthermore, since biological neural networks are sparsely connected, \cite{hanggi2014hypothesis}, this becomes a sparse realization problem. 


In this paper, motivated by the biological angle presented above, we aim to solve the following problem: given input-output dynamics of an LTI system in transfer function form, and desired sparsity patterns, find a state-space realization for this system that adheres to the desired sparsity patterns. Our work focuses strictly on the realization of the transfer function without investigating the design of this transfer function. This is different from existing works in distributed control, which solve for transfer function and sparse realization simultaneously using convex optimization \cite{anderson2017structured, scherer2000design, sabuau2023network} or iterative approaches \cite{wenk2003parameter, lin2013design}. Few works focus on the sparse realization problem in isolation; \cite{li2020separating} is one such example but it is inexact and unable to handle arbitrary sparsity constraints due to parameterization limitations.

Our approach is as follows. First, we introduce relevant definitions (Section \ref{sec:setup}). We then demonstrate that solving the sparse realization problem is equivalent to solving a system of multivariate polynomial equations (Section \ref{sec:equivalence}), and that tools from algebraic geometry \cite{cox1997ideals, cox1998using, lasserre2008semidefinite} can be adapted to solve this problem (Section \ref{sec:grobner}). We conclude with simulations demonstrating the efficacy of our approach (Section \ref{sec:simulations}) and directions of future work (Section \ref{sec:conclusions}).


\textbf{Notation:} $\delta_{ij}$ is an $n\times n$ matrix with the $(i,j)$-th entry equal to $1$ and zeros everywhere else. $\Bbbk[x_1,\cdots,x_k]$ is the set of multivariate polynomials consisting of variables $x_1,\cdots,x_k$ with coefficients in $\Bbbk = \R$ or $\C$.

\section{Problem setup} \label{sec:setup}

Consider a linear time-invariant (LTI) system represented as a rational proper (i.e., causal) transfer function $\mbf{K}(s)$. Consider the following state-space system:
\begin{equation} \label{equ:abcd_realization}
    \begin{aligned}
        \dot{z}(t) &= Az(t)+By(t)\\
        u(t) &= Cz(t)+Dy(t)
    \end{aligned}
\end{equation}

System $\Sigma = (A,B,C,D)$ is a \tbf{realization} of $\mbf{K}(s)$ when
\begin{equation}    
    \mbf{K}(s) = C(sI-A)^{-1}B + D
\end{equation}

Note that we deliberately choose our notation to suggest that $\mbf{K}(s)$ is a controller with input $y(t)$ (e.g., the output of some LTI plant) and output $u(t)$ (e.g., control input), which reflects our motivating problem. However, all results in this paper naturally carry over to realizations of generic LTI systems, not just controllers. Furthermore, while \eqref{equ:abcd_realization} is written in continuous-time, the realization problem is equivalent in discrete-time; thus, all results from this paper are directly applicable to discrete-time systems as well. 



We are interested in sparse realizations (i.e., $A, B, C, D$ are sparse). We first define some relevant concepts.

\begin{definition}
    A \tbf{minimal realization} $\Sigma=(A,B,C,D)$, $A\in\C^{n\times n}$ of $\mbf{K}(s)$ is a realization with the smallest possible $n$, 
\end{definition}

In the rest of the paper, we only consider minimal realizations.
 
\begin{definition}
    For LTI system $\Sigma = (A,B,C,D)$ with $A\in\C^{n\times n}$, $B\in\C^{n\times m}$, $C\in\C^{q\times n}$, and $D\in\C^{q\times m}$, its \tbf{support} is $S = (\Acal,\Bcal,\Ccal,\Dcal)$ where $\Acal\in\{0,1\}^{n\times n}$, $\Bcal\in\{0,1\}^{n\times m}$, $\Ccal \in \{0,1\}^{q\times n}$, $\Dcal \in\{0,1\}^{q\times m}$ and:
    \begin{equation}
        \begin{aligned}
            \Acal_{ij} =\begin{cases}
                1, \quad A_{ij} \neq 0 \\ 0, \quad A_{ij} = 0
            \end{cases} \quad 
            \Bcal_{ij} = \begin{cases}
                1,\quad B_{ij} \neq 0 \\ 0, \quad B_{ij} = 0
            \end{cases} \\ 
            \Ccal_{ij} = \begin{cases}
                1,\quad C_{ij} \neq 0\\ 0,\quad C_{ij} = 0
            \end{cases}\quad 
            \Dcal_{ij} = \begin{cases}
                1,\quad D_{ij}\neq 0 \\ 0,\quad D_{ij} = 0
            \end{cases}
        \end{aligned}
    \end{equation}
\end{definition} 

\begin{definition}
    A state-space realization $\Sigma = (A,B,C,D)$ of $\mbf{K}(s)$ \textbf{satisfies} a support $S$ if:
    \begin{equation}
    \begin{aligned}
        \forall i,j\text{ s.t. } {\Acal}_{ij} = 0,\quad A_{ij} = 0\\
        \forall i,j\text{ s.t. } {\Bcal}_{ij} = 0,\quad B_{ij} = 0\\
        \forall i,j\text{ s.t. }  {\Ccal}_{ij} = 0,\quad C_{ij} = 0\\
        \forall i,j\text{ s.t. }  {\Dcal}_{ij} = 0,\quad D_{ij} = 0
    \end{aligned}
    \end{equation}

\end{definition}

\begin{definition}
    Rational proper transfer function $\up{\tbf{K}}(s)$ is \tbf{realizable} on a support $S = (\Acal,\Bcal,\Ccal,\Dcal)$ if there exists a realization $\Sigma$ of $\mbf{K}(s)$ s.t. $\Sigma$ satisfies $S$.
\end{definition}

We are interested in the following problems:

\begin{problem}
    Given a rational proper transfer function $\mbf{K}(s)$ and support $S=(\Acal,\Bcal,\Ccal,\mathcal{D})$ with dimensions that match a minimal realization, find a realization $\Sigma = (A,B,C,D)$ of $\mbf{K}(s)$ s.t. $\Sigma$ satisfies $S$.  \label{pb:complex_realization}
\end{problem}

\begin{problem}
    Given a rational proper transfer function $\mbf{K}(s)$ and support $S=(\Acal,\Bcal,\Ccal,\mathcal{D})$ with dimensions that match a minimal realization, find a realization $\Sigma = (A,B,C,D)$ of $\mbf{K}(s)$ s.t. $A$, $B$, $C$, $D$ are \textbf{real-valued} matrices and $\Sigma$ satisfies $S$.  \label{pb:real_realization}
\end{problem}

Trivially, solutions to Problem \ref{pb:real_realization} are solutions to Problem \ref{pb:complex_realization}; however, finding solutions to Problem \ref{pb:real_realization} are more involved (see Section \ref{sec:grobner}). In the remainder of the paper, we will first propose an approach to solve Problem \ref{pb:complex_realization}, then introduce additional tools to solve Problem \ref{pb:real_realization}. 


\section{Equivalence of sparse realization to multivariate polynomial equations} \label{sec:equivalence}

In this section, we will explain how Problems \ref{pb:complex_realization} and \ref{pb:real_realization} can be equivalently formulated as finding solutions to systems of multivariate polynomial equations. We will derive:
\begin{enumerate}
    \item A necessary and sufficient criterion that determines existence of solutions of Problem \ref{pb:complex_realization}.
    \item A sufficient criterion that guarantees solutions of Problem \ref{pb:real_realization} exists.
\end{enumerate}

First, we recall two well-known results on realizations.

\begin{theorem}[{\cite[Prop 2.3]{de2000minimal}}]
\label{thm:similarity_of_min_realizations}
    For two minimal realizations of \up{$\mbf{K}(s)$} written as $\Sigma = (A,B,C,D)$ and $\Tilde{\Sigma} = (\Tilde{A},\Tilde{B},\Tilde{C},D)$, there exists a unique invertible matrix $T \in \mathbb{C}^{n \times n}$ s.t.
    \begin{equation}
        \tilde{A} = T^{-1}AT, \quad \tilde{B} = T^{-1}B, \quad\tilde{C} = CT
    \end{equation}
\end{theorem}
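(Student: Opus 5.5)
The plan is the classical argument through the controllability and observability matrices. Let both realizations have state dimension $n$; minimality forces the two dimensions to agree. Define
\begin{equation*}
\mathcal{O} = \begin{bmatrix} C \\ CA \\ \vdots \\ CA^{n-1}\end{bmatrix},\quad
\mathcal{R} = \begin{bmatrix} B & AB & \cdots & A^{n-1}B\end{bmatrix},
\end{equation*}
and define $\tilde{\mathcal{O}}, \tilde{\mathcal{R}}$ analogously from $(\tA,\tB,\tC)$.

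\textbf{Step 1.} A minimal realization is controllable and observable. If it were not, the Kalman decomposition would yield a realization of $\mbf{K}(s)$ of smaller order, contradicting minimality. Hence $\mathcal{O}, \tilde{\mathcal{O}}$ have full column rank $n$, and $\mathcal{R},\tilde{\mathcal{R}}$ have full row rank $n$.

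\textbf{Step 2.} Since both realizations have the same $D$ and the same transfer function, expanding $C(sI-A)^{-1}B = \sum_{k\ge 0} CA^kB\, s^{-k-1}$ for large $|s|$ shows that the Markov parameters agree: $CA^kB = \tC\tA^k\tB$ for all $k\ge 0$. Consequently
\begin{equation*}
\mathcal{O}\mathcal{R} = \tilde{\mathcal{O}}\tilde{\mathcal{R}}, \qquad \mathcal{O}A\mathcal{R} = \tilde{\mathcal{O}}\tA\tilde{\mathcal{R}}.
\end{equation*}
Comparing the first block column and the first block row of the first identity gives $\mathcal{O}B = \tilde{\mathcal{O}}\tB$ and $C\mathcal{R} = \tC\tilde{\mathcal{R}}$.

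\textbf{Step 3.} Let $\mathcal{O}^{\dagger} = (\mathcal{O}^*\mathcal{O})^{-1}\mathcal{O}^*$ be a left inverse of $\mathcal{O}$ and $\tilde{\mathcal{R}}^{\dagger}$ a right inverse of $\tilde{\mathcal{R}}$, and set $T = \mathcal{R}\tilde{\mathcal{R}}^{\dagger}$. From $\mathcal{O}\mathcal{R} = \tilde{\mathcal{O}}\tilde{\mathcal{R}}$ we get
\begin{equation*}
T = \mathcal{O}^{\dagger}\tilde{\mathcal{O}}\tilde{\mathcal{R}}\tilde{\mathcal{R}}^{\dagger} = \mathcal{O}^{\dagger}\tilde{\mathcal{O}}.
\end{equation*}
Symmetrically, $S = \tilde{\mathcal{O}}^{\dagger}\mathcal{O}$ satisfies $ST = \tilde{\mathcal{O}}^{\dagger}\mathcal{O}\mathcal{R}\tilde{\mathcal{R}}^{\dagger} = \tilde{\mathcal{O}}^{\dagger}\tilde{\mathcal{O}}\tilde{\mathcal{R}}\tilde{\mathcal{R}}^{\dagger} = I$, so $T$ is invertible.

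The useful consequences are $\mathcal{O}T = \tilde{\mathcal{O}}$ and $T\tilde{\mathcal{R}} = \mathcal{R}$. For the first, $\mathcal{O}T = \mathcal{O}\mathcal{R}\tilde{\mathcal{R}}^{\dagger} = \tilde{\mathcal{O}}\tilde{\mathcal{R}}\tilde{\mathcal{R}}^{\dagger} = \tilde{\mathcal{O}}$. The second follows the same way from $T=\mathcal{O}^{\dagger}\tilde{\mathcal{O}}$. Then:
\begin{itemize}
\item $\tC = CT$ follows by reading off the first block row of $\mathcal{O}T = \tilde{\mathcal{O}}$.
\item $\tB = T^{-1}B$ follows from the first block column of $T\tilde{\mathcal{R}} = \mathcal{R}$.
\item From $\mathcal{O}A\mathcal{R} = \tilde{\mathcal{O}}\tA\tilde{\mathcal{R}} = \mathcal{O}T\tA T^{-1}\mathcal{R}$, cancelling $\mathcal{O}$ on the left (full column rank) and $\mathcal{R}$ on the right (full row rank) gives $A = T\tA T^{-1}$.
\end{itemize}

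\textbf{Step 4 (uniqueness).} Suppose $T_1$ and $T_2$ both work. Then $\tC\tA^k = CA^kT_i$ for $i=1,2$ and all $k$, so $\mathcal{O}T_1 = \tilde{\mathcal{O}} = \mathcal{O}T_2$. Full column rank of $\mathcal{O}$ then gives $T_1 = T_2$.

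The main obstacle is Step 3: one must handle the one-sided pseudo-inverses carefully and show the constructed $T$ is simultaneously compatible with all of $A$, $B$ and $C$. This is exactly where the shifted Hankel identity $\mathcal{O}A\mathcal{R} = \tilde{\mathcal{O}}\tA\tilde{\mathcal{R}}$, together with the rank facts from Step 1, does the work.
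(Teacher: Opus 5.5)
Your proof is correct: the identity $\mathcal{O}\mathcal{R} = \tilde{\mathcal{O}}\tilde{\mathcal{R}}$ together with the one-sided inverses yields an invertible $T$ with $\mathcal{O}T = \tilde{\mathcal{O}}$ and $T\tilde{\mathcal{R}} = \mathcal{R}$. The three similarity relations then follow as you say, and uniqueness follows from the full column rank of $\mathcal{O}$. The paper gives no proof of its own and simply cites the result; your argument is the classical state-space isomorphism proof via controllability and observability matrices behind that citation. The relations $\mathcal{O}B = \tilde{\mathcal{O}}\tB$ and $C\mathcal{R} = \tC\tilde{\mathcal{R}}$ derived in your Step 2 are never used and can be dropped.
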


Note that matrix $D$ is the same for all minimal realizations of $\mbf{K}(s)$; its sparsity is fixed and cannot be altered through similarity transforms. This prompts the following corollary:

\begin{corollary} \label{corr:D}
    Given rational proper transfer function $\mbf{K}(s)$ and support $S = (\Acal,\Bcal,\Ccal,\mathcal{D})$, we have $D = \lim_{s \rightarrow \infty}\mbf{K}(s)$. Then, if $\exists i,j\quad s.t.\quad \Dcal_{ij} = 0 \quad and \quad D_{ij}\neq 0$, $\mbf{K}(s)$ is not realizable on support $S$.
\end{corollary}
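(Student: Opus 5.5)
The argument has two parts. First we show that $D$ is determined by $\mbf{K}(s)$, namely $D=\lim_{s\to\infty}\mbf{K}(s)$, for every realization. Then we use this to rule out any realization satisfying $S$ when the sparsity of $D$ conflicts with $\Dcal$.

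For the first part, take any realization $\Sigma=(A,B,C,D)$ of $\mbf{K}(s)$, so $\mbf{K}(s)=C(sI-A)^{-1}B+D$. We show $C(sI-A)^{-1}B\to 0$ as $\abs{s}\to\infty$. For $\abs{s}>\|A\|$ the Neumann series gives
\begin{equation*}
(sI-A)^{-1}=\frac{1}{s}\sum_{k\ge 0}\Big(\frac{A}{s}\Big)^k ,
\end{equation*}
and hence
\begin{equation*}
\|(sI-A)^{-1}\|\le \frac{1}{\abs{s}-\|A\|}\to 0 .
\end{equation*}
Alternatively, by Cramer's rule each entry of $(sI-A)^{-1}$ is a ratio of polynomials whose numerator degree is at most $n-1$ and whose denominator $\det(sI-A)$ has degree $n$, so each entry is strictly proper. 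Either way $C(sI-A)^{-1}B\to 0$, so $\lim_{s\to\infty}\mbf{K}(s)=D$. This holds for every realization, not only minimal ones, which is consistent with the remark after Theorem~\ref{thm:similarity_of_min_realizations} that $D$ is shared by all minimal realizations and is unaffected by similarity transforms.

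For the second part, argue by contradiction. Suppose some $(i,j)$ has $\Dcal_{ij}=0$ while $D_{ij}\neq 0$, where $D=\lim_{s\to\infty}\mbf{K}(s)$. Suppose also that $\mbf{K}(s)$ is realizable on $S$, with realization $\Sigma'=(A',B',C',D')$ satisfying $S$. By the definition of satisfying a support, $D'_{ij}=0$. By the first part, $D'=\lim_{s\to\infty}\mbf{K}(s)=D$, so $D'_{ij}=D_{ij}\neq0$, a contradiction. Therefore $\mbf{K}(s)$ is not realizable on $S$.

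The only step needing any care is the limit computation, where we must be explicit that the strictly proper term vanishes. This follows from the resolvent bound above, so I expect no real obstacle.
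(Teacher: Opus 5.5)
Your proof is correct, but it takes a different route from the paper. The paper gives no separate proof. It presents the corollary as a consequence of Theorem~\ref{thm:similarity_of_min_realizations}: all minimal realizations are related by similarity transforms, which leave $D$ untouched, so the sparsity of $D$ is the same for every minimal realization. The identity $D=\lim_{s\to\infty}\mbf{K}(s)$ is simply asserted.

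You instead prove that identity directly, using the resolvent bound $\|(sI-A)^{-1}\|\le 1/(\abs{s}-\|A\|)$ or strict properness of the resolvent via Cramer's rule. You then get invariance of $D$ as a consequence. Your version gains two things:
\begin{enumerate}
\item It applies to every realization, not only minimal ones. The definition of ``realizable on $S$'' does not itself require minimality, so the similarity argument alone only rules out minimal realizations.
\item It supplies the actual reason $D$ is invariant. Theorem~\ref{thm:similarity_of_min_realizations}, as stated, already assumes both realizations share the same $D$, so it cannot establish that fact on its own.
\end{enumerate}
The paper's framing is shorter and fits its narrative of reducing everything to similarity transforms of the modal realization, but your argument is the more self-contained of the two.
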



As shown above, it is trivial to determine whether sparse realizations exist when $\mbf{K}(s)$ is non-strictly proper (i.e., $D \neq 0$). For the rest of the paper, we assume that the conditions of Corollary \ref{corr:D} are not met, i.e., $\forall i,j\quad s.t.\quad \Dcal_{ij} = 0, \lim_{s \rightarrow \infty}\mbf{K}_{ij}(s) = 0$, so that $\mbf{K}(s)$ may be realizable on support $S$. 
We now focus on $A$, $B$, $C$.
We can always find a minimal realization of $\mbf{K}(s)$ with diagonal (i.e., extremely sparse) $A$:

\begin{lemma}
    \label{thm:existence_of_parallel_realization}
    For rational proper transfer function $\mbf{K}(s)$ with input dimension $m$, output dimension $p$, there exists minimal \tbf{modal realization} $\Sigma_p = (A_p,B_p,C_p,D_p)$ of $\mbf{K}(s)$ of the form:
    \begin{equation}
    \begin{aligned}
        A_p &= \begin{bmatrix}
            p_1 \\ &\ddots \\ &&p_n
        \end{bmatrix}, \quad 
        B_p = \begin{bmatrix}
            b_{11}& \cdots& b_{1m} \\ \vdots && \vdots \\ b_{n1} & \cdots & b_{nm}
        \end{bmatrix}, \\
        C_p &= \begin{bmatrix}
            c_{11} &  \cdots & c_{1n} \\
            \vdots &  & \vdots \\
            c_{p1} &  \cdots & c_{pn}
        \end{bmatrix}, \quad
        D_p = \begin{bmatrix}
            d_{11} & \cdots &d_{1m} \\ \vdots & & \vdots \\ d_{p1} & \cdots & d_{pm}
        \end{bmatrix}
    \end{aligned}
    \label{equ:parallel_realization}
    \end{equation}
    $A_p\in\C^{n\times n}$, $B_p\in \C^{n\times m}$, $C_p\in \C^{q\times n}$, $D_p\in \C^{q\times m}$ such that $\mbf{K}(s) = C_p(s I-A_p)^{-1}B_p+D_p$.

    Moreover, when $p_i\in\R$ ($i = 1,\cdots,n$) and coefficients in $\mbf{K}(s)$ are real numbers, we can find modal realizations with real-valued $A_p$, $B_p$, $C_p$, $D_p$.
\end{lemma}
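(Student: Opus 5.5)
The plan is to construct the modal realization explicitly via partial fraction expansion (Gilbert's realization). Minimality will then follow from the McMillan degree. First, set $D_p = \lim_{s\to\infty}\mbf{K}(s)$, which exists because $\mbf{K}(s)$ is proper. This leaves the strictly proper part $\mbf{K}(s) - D_p$ to be realized.

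The construction requires that the poles of $\mbf{K}(s)$ be simple, i.e.\ that no entry of $\mbf{K}(s)$ has a repeated pole. This hypothesis is genuinely needed and is not merely a convenience. For $\mbf{K}(s) = 1/(s+1)^2$, every minimal realization has $A$ similar to a $2\times 2$ Jordan block, by Theorem \ref{thm:similarity_of_min_realizations}. So no diagonal minimal realization exists. I would therefore state this assumption explicitly, as the paper implicitly does by writing distinct $p_i$.

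Under this assumption, let $\lambda_1,\dots,\lambda_k$ be the distinct poles and write
\begin{equation*}
\mbf{K}(s) - D_p = \sum_{\ell=1}^{k} \frac{R_\ell}{s-\lambda_\ell},
\end{equation*}
where the residues $R_\ell \in \C^{q\times m}$ are nonzero. Let $r_\ell = \rank(R_\ell)$, and take a rank factorization $R_\ell = C_\ell B_\ell$ with $C_\ell \in \C^{q\times r_\ell}$ and $B_\ell \in \C^{r_\ell \times m}$. Then stack these blocks:
\begin{equation*}
A_p = \mathrm{diag}(\lambda_1 I_{r_1},\dots,\lambda_k I_{r_k}), \quad B_p = \begin{bmatrix} B_1^\top & \cdots & B_k^\top\end{bmatrix}^\top, \quad C_p = \begin{bmatrix} C_1 & \cdots & C_k\end{bmatrix}.
\end{equation*}
A direct computation gives $C_p(sI-A_p)^{-1}B_p = \sum_\ell C_\ell B_\ell/(s-\lambda_\ell)$, as required. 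The diagonal entries $p_1,\dots,p_n$ are the $\lambda_\ell$ repeated $r_\ell$ times, so $n=\sum_\ell r_\ell$.

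The main obstacle is proving minimality. I would check controllability and observability with the PBH test. At $s=\lambda_\ell$, the matrix $[\lambda_\ell I - A_p,\ B_p]$ has rank $n$ exactly when $B_\ell$ has full row rank $r_\ell$. Similarly, $[\lambda_\ell I - A_p;\ C_p]$ has rank $n$ exactly when $C_\ell$ has full column rank. Both hold by the choice of rank factorization. A controllable and observable realization is minimal, and $n=\sum_\ell \rank R_\ell$ is then the McMillan degree.

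For the real-valued claim, assume every $p_i \in \R$ and $\mbf{K}(s)$ has real coefficients. Then $D_p$ is real. Each residue $R_\ell = \lim_{s\to\lambda_\ell}(s-\lambda_\ell)(\mbf{K}(s)-D_p)$ is real, since it is a limit of real rational functions at a real point. A real matrix admits a real rank factorization, for example via the real SVD or real row reduction. Hence $A_p, B_p, C_p, D_p$ can all be chosen real.
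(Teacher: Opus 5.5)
The paper gives no proof of this lemma; it treats it as a standard fact. The simple-pole restriction appears only in the sentence after the lemma, where the paper says it limits itself to $\mbf{K}(s)$ with simple poles, which admit modal realizations. The closest thing to an argument in the paper is inside the proof of the SISO generally-realizing-support lemma. There, $\mbf{K}(s)=\sum_i c_i/(s-p_i)+d$ is realized by $A_p=\mathrm{diag}(p_i)$, $B_p$ equal to the all-ones vector and $C_p=[c_1\ \cdots\ c_n]$, and minimality is simply asserted. That is exactly the $m=q=1$ case of your Gilbert construction, with the trivial rank factorization $R_\ell=c_\ell\cdot 1$.

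Your proposal is correct and more complete than anything in the paper:
\begin{itemize}
\item The rank factorization handles MIMO residues of rank greater than one. These force repeated diagonal entries, so $n=\sum_\ell \rank R_\ell$ can exceed the number of distinct poles.
\item The PBH test actually establishes minimality, using that the $\lambda_\ell$ are distinct and that $B_\ell$, $C_\ell$ have full row and column rank.
\item The real-residue argument justifies the ``moreover'' clause.
\item Your $1/(s+1)^2$ example shows the lemma is false as literally stated without a simple-pole hypothesis.
\end{itemize}

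One correction to your framing: the paper does not write ``distinct $p_i$'', and that would be the wrong hypothesis for MIMO systems. For example, $\mbf{K}(s)=I_2/(s+1)$ has minimal realization $A=-I_2$, $B=C=I_2$, with a repeated diagonal entry. The right condition is the one you actually use: every entry of $\mbf{K}(s)$ has only simple poles.
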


For the remainder of the paper we limit ourselves to considering $\mbf{K}(s)$ with simple (i.e., non-repeated) poles, which admit modal realizations. Modal realizations have sparse $A$ and dense $B$ and $C$. We can use Theorem \ref{thm:similarity_of_min_realizations} and Lemma \ref{thm:existence_of_parallel_realization} to equivalent pose Problems \ref{pb:complex_realization} and \ref{pb:real_realization} as problems of determining the existence of similarity transform $T$.

\begin{corollary}
    \label{thm:problems_equal_finding_T}
    A solution to Problem \ref{pb:complex_realization} exists if and only if for any modal realization $\Sigma_p=(A_p,B_p,C_p,D_p)$ of $\mbf{K}(s)$, there exists an invertible $T\in\C^{n\times n}$, such that $\Sigma_T = (T^{-1}A_pT,T^{-1}B_p,C_pT,D_p)$ satisfies $S$. 

    A solution to Problem \ref{pb:real_realization} exists if and only if for any modal realization $\Sigma_p=(A_p,B_p,C_p,D_p)$ of $\mbf{K}(s)$, there exists an invertible $T\in\C^{n\times n}$, such that $\Sigma_T = (T^{-1}A_pT,T^{-1}B_p,C_pT,D_p)$ satisfies $S$, and $T^{-1}A_pT,T^{-1}B_p,C_pT,D_p$ are real-valued matrices.
    \label{thm:existence_of_T}
\end{corollary}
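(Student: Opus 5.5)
Both parts follow from Theorem \ref{thm:similarity_of_min_realizations} and Lemma \ref{thm:existence_of_parallel_realization}, together with the fact that similarity transforms preserve the transfer function. The common ingredient is this: for any invertible $T$,
\begin{equation*}
C_pT\,(sI-T^{-1}A_pT)^{-1}\,T^{-1}B_p + D_p = C_p(sI-A_p)^{-1}B_p + D_p = \mbf{K}(s).
\end{equation*}
It follows from $(sI-T^{-1}A_pT)^{-1} = T^{-1}(sI-A_p)^{-1}T$. Moreover, $\Sigma_T$ has the same state dimension $n$ as $\Sigma_p$. Since $\Sigma_p$ is minimal, $\Sigma_T$ is therefore also a minimal realization of $\mbf{K}(s)$.

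For the ``if'' direction of the first statement, I fix any modal realization $\Sigma_p$; one exists by Lemma \ref{thm:existence_of_parallel_realization}. Suppose an invertible $T$ exists with $\Sigma_T$ satisfying $S$. By the identity above, $\Sigma_T$ is a realization of $\mbf{K}(s)$. Its dimensions match those of $S$, because Problem \ref{pb:complex_realization} takes $S$ to have the dimensions of a minimal realization. Hence $\Sigma_T$ solves Problem \ref{pb:complex_realization}.

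For the ``only if'' direction, I take a solution $\Sigma=(A,B,C,D)$ of Problem \ref{pb:complex_realization}. Its state dimension equals that of a minimal realization, so $\Sigma$ is itself minimal. The step that needs care is $D$. Every realization has $D=\lim_{s\to\infty}\mbf{K}(s)$, since $C(sI-A)^{-1}B\to 0$ as $s\to\infty$. This gives $D=D_p$, which is exactly the setting of Theorem \ref{thm:similarity_of_min_realizations}. Applying that theorem to $\Sigma_p$ and $\Sigma$ yields an invertible $T$ with $A=T^{-1}A_pT$, $B=T^{-1}B_p$ and $C=C_pT$. So $\Sigma_T=\Sigma$, which satisfies $S$. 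This holds for every modal realization $\Sigma_p$, which gives the ``for any'' quantifier.

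The second statement uses the same argument with a realness condition added. In the ``if'' direction, the constructed $\Sigma_T$ is real by hypothesis. In the ``only if'' direction, the real solution $\Sigma$ equals $\Sigma_T$, so the matrices of $\Sigma_T$ are real. I note that $T$ itself need not be real, and the statement does not ask it to be; this is why $T$ ranges over $\C^{n\times n}$.

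The main point needing care is the minimality bookkeeping in the ``only if'' direction. We need that any realization whose state dimension matches a minimal one is minimal, and that $D$ is forced to equal $D_p$. Once both are checked, Theorem \ref{thm:similarity_of_min_realizations} applies directly.
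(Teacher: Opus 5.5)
Your proposal is correct and takes the same route the paper indicates. The paper gives no separate proof and simply cites Theorem~\ref{thm:similarity_of_min_realizations} and Lemma~\ref{thm:existence_of_parallel_realization}; your argument fills in the bookkeeping those citations leave implicit: invariance of the transfer function under similarity, $D=D_p=\lim_{s\to\infty}\mbf{K}(s)$, and minimality of any realization whose state dimension equals the minimal one. As in the paper, your ``if'' direction relies on a modal realization actually existing, which is guaranteed by the paper's standing simple-pole assumption.
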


Invertible matrices $T$ are products of invertible elementary matrices; this can be seen from row reduction. There are three types of elementary matrices. They correspond to row (or column) addition, multiplication of a row (or column) by a scalar, and permutation of rows (or columns).

An elementary addition matrix is an identity matrix with one off-diagonal entry replaced by a scalar $x$:
\begin{equation}
    E^x_{i_0j_0} = I_n+x\delta_{i_0j_0},\quad (E^x_{i_0j_0})^{-1} = I_n-x\delta_{i_0j_0} \label{equ:elementary_addition}
\end{equation}

Left multiplying a matrix $M$ by elementary addition matrix $E^x_{i_0j_0}$ means $x$ times the $j_0$-th row is added to the $i_0$-th row of $M$.

An elementary permutation matrix $E^p_{i_0j_0}$ is an identity matrix with rows $i_0$ and $j_0$ permuted. Left multiplying a matrix $M$ by $E^p_{i_0j_0}$ permutes the $i_0$-th and $j_0$-th rows of $M$. An elementary multiplication matrix $E^m_{i_0}$ is an identity matrix with a row $i_0$ multiplied by a scalar $a$. Left multiplying a matrix $M$ by $E^m_{i_0}$ multiplies the $i_0$-th row of $M$ by scalar $a$.

We can apply a sequence of row operations (i.e., elementary matrix multiplications) to convert invertible matrix $T$ to the identity matrix. We can first apply row permutation, then apply row addition and finally apply row multiplication. Each of those elementary operation are equivalent to left multiplying an elementary matrix. We denote the product of permutation matrices as $T_p$, the product of elementary addition matrices $T_a$ and the product of elementary multiplication matrices as $T_m$. Then, we can invert $T$ as follows

\begin{equation}
    T_mT_aT_pT = I,\quad T^{-1} = T_mT_aT_p \label{equ:decompose_T}
\end{equation}

The order of row additions is not unique, so $T_a$ is not unique. We use the following method to construct a valid, fixed $T_a$ to be used for the rest of the paper; subsequent analysis holds for other constructions of $T_a$ as well. Given $T_pT$, first cancel out the nonzero entries of rows $2,\cdots,n$ in the first column by adding the appropriate scalar times the first row to each row. Then, cancel out the nonzero entries of rows $1,3,\cdots,n$ in the second column by adding the appropriate scalar times the second row to each row. Then, cancel out the nonzero entries of rows $1, 2, 4, \cdots, n$ in the third column, and so on. Repeat this process; the last step is to cancel out the nonzero entries of rows $1,\cdots,n-1$ of the $n$-th column by adding a proper scalar times the $n$-th row to each row. For the rest of the paper, $T_a$ represents the operation described above. 

When $T$ is invertible, such a sequence of operations is always possible when $T_p$ is properly chosen. $T_p$, like $T_a$, is not unique; we use $\Pcal_p$ to denote the set of possible values of $T_p$, with $\abs{\Pcal_p} = n!$. We discuss the choice of $T_p$ in subsequent sections.

In total, $n(n-1)$ row additions were made by $T_a$. By \eqref{equ:elementary_addition}, each entry of an elementary addition matrix and its inverse are a polynomial of the matrix's parameter (i.e., the scalar value $x$) with real coefficients. We denote those parameters as $x_1,\cdots,x_{n(n-1)}$. Then, each entry of $T_a$ and $T_a^{-1}$ are polynomials of $x_1,\cdots,x_{n(n-1)}$ with real coefficients, written as: $\left[T_a\right]_{ij}\in\R[x_1,\cdots,x_{n(n-1)}]$, $\left[T_a^{-1}\right]_{ij}\in\R[x_1,\cdots,x_{n(n-1)}]$. These can be combined into a matrix of polynomials: $T_a = T_a(x_1,\cdots,x_{n(n-1)})$.

By Corollary \ref{thm:problems_equal_finding_T}, a solution to Problems \ref{pb:complex_realization} and \ref{pb:real_realization} exists if and only if we can find a $T_m$, $T_p$ and $T_a$. In fact, we do not need to solve for $T_m$ because $T_m$ does not change the sparsity:

\begin{theorem}
    \label{thm:realizability_and_T}
    Rational proper transfer function \up{$\mbf{K}(s)$} is realizable on a support $S$ if and only if $\exists$ invertible $T_p,T_a$ s.t. for $\Tilde{A} = (T_aT_p)A_p(T_aT_p)^{-1}$, $\Tilde{B} = (T_aT_p)B_p$, $\tilde{C} = C_p(T_aT_p)^{-1}$, $\tilde{\Sigma} = (\tA,\tB,\tC,D_p)$ satisfies $S$.
\end{theorem}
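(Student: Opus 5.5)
The plan is to prove the two directions separately, using Corollary \ref{thm:problems_equal_finding_T} together with the decomposition \eqref{equ:decompose_T}. The central observation is that $T_m$ is diagonal and invertible. Conjugating by a diagonal matrix, or multiplying on one side by it, rescales entries by nonzero factors and never creates or destroys zeros. So the support is invariant under $T_m$.

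\emph{Sufficiency.} Suppose invertible $T_p, T_a$ exist such that $\tilde\Sigma$ satisfies $S$. Set $T = (T_aT_p)^{-1}$. Then $T^{-1}A_pT = \tA$, $T^{-1}B_p = \tB$ and $C_pT = \tC$, with the same $D_p$. Since $D_p=\lim_{s\to\infty}\mbf{K}(s)$ and we are assuming the conditions of Corollary \ref{corr:D} fail, $D_p$ satisfies $\Dcal$. Hence $\tilde\Sigma$ is a realization of $\mbf{K}(s)$ satisfying $S$, and $\mbf{K}(s)$ is realizable on $S$. Because $T$ is invertible, $\tilde\Sigma$ is also minimal.

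\emph{Necessity.} Suppose some realization $\Sigma=(A,B,C,D)$ satisfies $S$. Since the support dimensions match a minimal realization, $\Sigma$ is minimal. By Theorem \ref{thm:similarity_of_min_realizations} applied to $\Sigma$ and the modal realization $\Sigma_p$ of Lemma \ref{thm:existence_of_parallel_realization}, there is an invertible $T$ with
\begin{equation*}
A=T^{-1}A_pT, \quad B=T^{-1}B_p, \quad C=C_pT .
\end{equation*}
Next apply the row-reduction construction described before the theorem. Choose $T_p\in\Pcal_p$ so that the fixed elimination procedure for $T_a$ never meets a zero pivot; I will justify this choice below. This gives $T^{-1}=T_mT_aT_p$ with $T_m=\mathrm{diag}(a_1,\dots,a_n)$ and all $a_i\neq0$. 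Then
\begin{equation*}
A=T_m\tA T_m^{-1}, \quad B=T_m\tB, \quad C=\tC T_m^{-1}.
\end{equation*}
Entrywise this reads $A_{ij}=a_i a_j^{-1}\tA_{ij}$, $B_{ij}=a_i\tB_{ij}$ and $C_{ij}=a_j^{-1}\tC_{ij}$. Every factor is nonzero, so $\tA,\tB,\tC$ have exactly the supports of $A,B,C$. Therefore $\tilde\Sigma$ satisfies $S$.

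\emph{Main obstacle.} The delicate step is justifying that for invertible $T$ there is a permutation $T_p$ such that $T_pT$ can be reduced to a diagonal matrix by the column-by-column additions defining $T_a$ without ever dividing by zero. Only then is the residual factor $T_m$ diagonal. The paper asserts this, and I would prove it by Gauss--Jordan elimination with partial pivoting. At step $k$ the current matrix is invertible, so some row $r\ge k$ has a nonzero entry in column $k$. The row additions commute with relabelling of the not-yet-used rows, so all the swaps can be collected into a single permutation applied up front. After the swaps each pivot is nonzero, and after all $n(n-1)$ additions the result is diagonal with nonzero diagonal. Its inverse is $T_m$, which is diagonal and invertible. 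This is exactly what the entrywise scaling argument above requires.
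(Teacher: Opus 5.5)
Your proof is correct and follows the same route as the paper. You write $T^{-1}=T_mT_aT_p$ using Corollary \ref{thm:problems_equal_finding_T} and \eqref{equ:decompose_T}, then observe that the diagonal invertible $T_m$ only rescales entries by nonzero factors, so $\tilde\Sigma$ and $(T_m\tA T_m^{-1},T_m\tB,\tC T_m^{-1},D_p)$ have identical supports. Your partial-pivoting argument for the existence of a suitable $T_p$ fills in a step the paper only asserts, while your remark about $D_p$ is redundant, since ``$\tilde\Sigma$ satisfies $S$'' already covers $\Dcal$.
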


\begin{proof}
    By Corollary \ref{thm:problems_equal_finding_T} and \eqref{equ:decompose_T}, $\mbf{K}(s)$ is realizable on $S$ if and only if we can find $\tilde{\Sigma}_a = (T_m\tA T_m^{-1}, T_m\tB,\tC T_m^{-1}, D_p)$. Since $T_m$ is a diagonal and invertible, $\tilde{\Sigma}$ satisfies support $S$ if and only if $\tilde{\Sigma}_a$ satisfies $S$. Thus, $\mbf{K}(s)$ is realizable on $S$ if and only if we can find $\tilde{\Sigma} = (\tA, \tB,\tC, D_p)$, which proves the claim.
\end{proof}

Based on Theorem \ref{thm:realizability_and_T}, we propose the following criterion to determine if the Problems \ref{pb:complex_realization} and \ref{pb:real_realization} are feasible.
\begin{corollary}
    \label{thm:problem_to_poly_eqns}
    Given modal realization $\Sigma_p = (A_p,B_p,C_p,D_p)$ and support $S$:
    \begin{enumerate}
        \item Problem \ref{pb:complex_realization} is solvable if and only if $\exists T_p\in\Pcal_p$ and $x_i\in\C$ for $T_a = T_a(x_1,\cdots,x_{n(n-1)})$ such that $\tilde{\Sigma} = (\tA,\tB,\tC,D_p)$ satisfies $S$, where $\Tilde{A} = (T_aT_p)A_p(T_aT_p)^{-1}$, $\Tilde{B} = (T_aT_p)B_p$, $\tilde{C} = C_p(T_aT_p)^{-1}$.
        
        \item Assume $\mbf{K}(s)$ is rational with real-valued coefficients in its numerator and denominator, and $A_p$, $B_p$, $C_p$, $D_p$ are real matrices. Then, Problem \ref{pb:real_realization} is solvable if $\exists T_p\in\Pcal_p$ and $x_i\in\R$ for $T_a = T_a(x_1,\cdots,x_{n(n-1)})$ such that $\tilde{\Sigma} = (\tA,\tB,\tC, D_p)$ satisfies $S$, where $\Tilde{A} = (T_aT_p)A_p(T_aT_p)^{-1}$, $\Tilde{B} = (T_aT_p)B_p$, $\tilde{C} = C_p(T_aT_p)^{-1}$.
    \end{enumerate}
\end{corollary}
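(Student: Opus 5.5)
Both parts follow from Theorem \ref{thm:realizability_and_T} once we show that the admissible pairs $(T_p,T_a)$ are exactly those with $T_p\in\Pcal_p$ and $T_a=T_a(x_1,\cdots,x_{n(n-1)})$ for some parameter vector $x$. Part 1 needs both directions; part 2 needs only sufficiency.

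\emph{Part 1, ``only if''.} Suppose Problem \ref{pb:complex_realization} is solvable. Theorem \ref{thm:realizability_and_T} gives invertible $T_p,T_a$ such that $\tilde{\Sigma}$ satisfies $S$. Here $T_p$ and $T_a$ come from the decomposition \eqref{equ:decompose_T} of the inverse of some invertible $T$. So $T_p$ is a product of permutation matrices, which puts it in $\Pcal_p$. By construction, $T_a$ is the fixed ordered product of the $n(n-1)$ elementary addition matrices $E^{x_k}_{i_kj_k}$ that clear each column of $T_pT$ in turn. Reading off the scalars used gives $x_k\in\C$ with $T_a=T_a(x_1,\cdots,x_{n(n-1)})$.

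\emph{Part 1, ``if''.} Conversely, suppose $T_p\in\Pcal_p$ and $x_i\in\C$ are given. Then $T_a(x)$ is a product of elementary addition matrices, each invertible with inverse given by \eqref{equ:elementary_addition}. So $T:=(T_aT_p)^{-1}$ is invertible. Moreover $\tilde{\Sigma}$ is exactly $\Sigma_T=(T^{-1}A_pT,T^{-1}B_p,C_pT,D_p)$ for this $T$. Since $\tilde{\Sigma}$ satisfies $S$ by hypothesis, Corollary \ref{thm:problems_equal_finding_T} gives a solution of Problem \ref{pb:complex_realization}. We do not need this particular $T$ to reproduce the canonical construction of $T_a$: Corollary \ref{thm:problems_equal_finding_T} only requires some invertible $T$.

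\emph{Part 2.} Assume the modal realization is real and the $x_i$ are real. Every entry of $T_a$ and $T_a^{-1}$ lies in $\R[x_1,\cdots,x_{n(n-1)}]$. Evaluating at real $x_i$ therefore gives real matrices. $T_p$ is a real permutation matrix with real inverse. Hence $\tA=(T_aT_p)A_p(T_aT_p)^{-1}$, $\tB=(T_aT_p)B_p$ and $\tC=C_p(T_aT_p)^{-1}$ are products of real matrices and so are real, and $D_p$ is real by assumption. Combined with the support condition and the ``if'' argument of part 1, $\tilde{\Sigma}$ is a real realization satisfying $S$, which solves Problem \ref{pb:real_realization}. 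Only sufficiency is claimed here. A real solution could come from a complex $T$, for example when the modal realization is complex or the $x_i$ are complex but the resulting product is real, so the converse need not hold.

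The main subtlety is the ``only if'' direction of part 1. There we must confirm that the canonical column-by-column elimination really yields a $T_a$ of the fixed parametric form $T_a(x)$, with exactly one elementary factor per off-diagonal position in a fixed order. This holds as long as every diagonal pivot of $T_pT$ stays nonzero during the elimination, and $T_p$ is chosen precisely so that this happens. If some elimination step needs no addition, we set the corresponding $x_k=0$, which keeps the number and order of factors unchanged.
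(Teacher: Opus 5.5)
Your proof is correct and is essentially the paper's argument. Part 1 reduces to Theorem~\ref{thm:realizability_and_T}/Corollary~\ref{thm:problems_equal_finding_T} once the admissible $T_a$ are identified with the family $T_a(x_1,\cdots,x_{n(n-1)})$, and part 2 holds because real $x_i$ make $T_a$, hence $\tA,\tB,\tC$, real. Your version just makes explicit what the paper leaves implicit: the two directions, the pivot condition on $T_pT$, and the convention of zero multipliers.

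One closing aside is inaccurate, though it does not affect what you prove. Under the hypotheses of part 2 the converse actually holds. Any real realization on $S$ is minimal, and its unique similarity $T$ to the real modal realization satisfies $O=O_pT$, where $O$ and $O_p$ are the real observability matrices and $O_p$ has full column rank. So $T$ is real, and elimination on a suitably permuted $T_pT$ produces real $x_i$.
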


\begin{proof}
    For the first claim, as $T_a=T_a(x_1,\cdots,x_{n(n-1)})$, existence of $T_a$ is equivalent to existence of $x_i$. Then by Theorem \ref{thm:existence_of_T} the claim holds.

    For the second claim, we only need to show that $\tA$, $\tB$, $\tC$ are real-valued. When $x_i\in\R$, as $\left[T_a\right]_{ij}\in\R[x_1,\cdots,x_{n{(n-1)}}]$, $T_a\in\R^{n\times n}$. As $T_p\in\R^{n\times n}$ and $\Sigma_p$ contains real-valued matrices, $\tA$, $\tB$, $\tC$, are real-valued as well, which proves the claim.
\end{proof}

Corollary \ref{thm:problem_to_poly_eqns} implies that we can treat Problems \ref{pb:complex_realization} and \ref{pb:real_realization} as finding solutions to multivariate polynomial equations. Since $\left[T_a\right]_{ij}\in\R[x_1,\cdots,x_{n(n-1)}]$ and $\left[T_a^{-1}\right]_{ij}\in\R[x_1,\cdots,x_{n(n-1)}]$, we have that each entry of $\tA$, $\tB$ and $\tC$ is a multivariate polynomial, expressed as $\tA_{ij}\in\R[x_1,\cdots,x_{n(n-1)}]$, $\tB_{ij}\in\R[x_1,\cdots,x_{n(n-1)}]$, $\tC_{ij}\in\R[x_1,\cdots,x_{n(n-1)}]$. $\tilde{\Sigma}$ satisfies $S$ if and only if:
\begin{equation}
\begin{aligned}
    \tA_{ij} = 0\quad\forall i,j \quad s.t. \quad \Acal_{ij} = 0 \\
    \tB_{ij} = 0\quad\forall i,j \quad s.t. \quad \Bcal_{ij} = 0 \\
    \tB_{ij} = 0\quad\forall i,j \quad s.t. \quad \Ccal_{ij} = 0
\end{aligned}
\label{equ:the_equivalent_multvar_poly_eqns}
\end{equation}
We can observe that \eqref{equ:the_equivalent_multvar_poly_eqns} is a system of polynomial equations. Thus, we can analyze feasibility of Problems \ref{pb:complex_realization} and \ref{pb:real_realization} by determining existence of solutions to \eqref{equ:the_equivalent_multvar_poly_eqns}.

While Problems \ref{pb:complex_realization} and \ref{pb:real_realization} are concerned with finding realizations for specific values of $\mbf{K}(s)$, we notice that for certain supports (e.g., the support of the modal realization), we can find a solution to \eqref{equ:the_equivalent_multvar_poly_eqns} and correspondingly, a realization for any $\mbf{K}(s)$ with an appropriate order. This motivates the following definitions:

\begin{definition}
    A support $S$ with $\mathcal{A} \in \{0, 1\}^{n \times n}$ is \tbf{generally realizing} if for all $\mbf{K}(s)$ with $n$ or less poles, $\mbf{K}(s)$ is realizable on $S$.
\end{definition}

It is obvious that when $S$ is the support of a modal realization, controllable canonical realization or observable canonical realization, $S$ is generally realizing. The following lemma states that if a support is generally realizing, any support that is ``strictly denser" than this support is also generally realizing.
\begin{lemma}
    If $S_0 = (\Acal_0,\Bcal_0,\Ccal_0,\Dcal^s_0)$ is generally realizing and $S = (\Acal,\Bcal,\Ccal,\Dcal^s)$ satisfies:
    \begin{equation}
        \begin{aligned}
            \forall i,j\quad [\Acal_0]_{ij}\neq 0 \implies \Acal_{ij} \neq  0\\
            \forall i,j\quad [\Bcal_0]_{ij}\neq 0 \implies \Bcal_{ij} \neq  0\\
            \forall i,j\quad [\Ccal_0]_{ij}\neq 0 \implies \Ccal_{ij} \neq  0\\
            \forall i,j\quad [\Dcal^s_0]_{ij}\neq 0 \implies \Dcal^s_{ij} \neq  0
        \end{aligned}
    \end{equation}
    Then $S$ is generally realizing.
\end{lemma}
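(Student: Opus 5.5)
The plan is to argue directly from the definitions, in three steps.

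First, fix an arbitrary rational proper $\mbf{K}(s)$ with $n$ or fewer poles. Since $S_0$ is generally realizing, $\mbf{K}(s)$ is realizable on $S_0$. So there is a realization $\Sigma=(A,B,C,D)$ of $\mbf{K}(s)$ that satisfies $S_0$. Explicitly, $A_{ij}=0$ whenever $[\Acal_0]_{ij}=0$, and similarly for $B$ with $\Bcal_0$, $C$ with $\Ccal_0$, and $D$ with $\Dcal^s_0$.

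Second, show that this same $\Sigma$ satisfies $S$. Take the contrapositive of each hypothesis: $\Acal_{ij}=0$ implies $[\Acal_0]_{ij}=0$. Since the supports are $\{0,1\}$-valued, ``$\neq 0$'' simply means ``$=1$.'' Hence, for every $(i,j)$ with $\Acal_{ij}=0$, we have $[\Acal_0]_{ij}=0$, and therefore $A_{ij}=0$. Repeating this entrywise for $\Bcal$, $\Ccal$ and $\Dcal^s$ shows that $\Sigma$ meets every zero constraint imposed by $S$.

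Third, conclude. $\Sigma$ is still a realization of $\mbf{K}(s)$, since we have not changed it. Thus $\mbf{K}(s)$ is realizable on $S$, and because $\mbf{K}(s)$ was arbitrary, $S$ is generally realizing.

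There is no real technical obstacle. The only point needing care is that the definition of ``satisfies'' imposes only zero constraints and never requires nonzero entries. A denser support therefore only relaxes the constraints. Under a stricter reading that demanded exact support equality, one would first perturb $\Sigma$ (for example, by a generic similarity transform in the sense of Corollary~\ref{thm:problems_equal_finding_T}) to fill in the extra entries. Under the paper's definition, however, this is unnecessary.

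Finally, I would note that the dimensions of $S$ and $S_0$ must agree for the entrywise comparison to make sense. This is implicit in the hypothesis, since it compares entries indexed by the same $(i,j)$.
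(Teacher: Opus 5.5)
Your argument is correct. The paper states this lemma without proof, treating it as immediate from the definitions, and your reasoning is exactly that implicit justification: reuse the realization $\Sigma$ that satisfies $S_0$, and note via the contrapositive that every zero constraint of $S$ is already a zero constraint of $S_0$.

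One caveat concerns only your closing aside. Under an exact-support reading, a generic similarity transform would not rescue the argument. Similarity leaves $D$ unchanged, so it cannot fill in extra entries there. A generic $T$ would also fill in entries that $S$ still requires to be zero. This is moot under the paper's definition of \emph{satisfies}, which imposes only zero constraints.
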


We now present a novel generally realizing support for single-input single-output (SISO) systems:
\begin{lemma}
    A generally realizing support $S = (\Acal,\Bcal,\Ccal,\Dcal)$ for SISO systems with $n$ states is:
    \begin{equation}
    \begin{aligned}
        \Acal_{ij} &= \begin{cases}
            1 &\quad \text{$i=1$ or $j=1$ or $i=j$} \\
            0 &\quad \text{otherwise}
        \end{cases}\\
        \Bcal_{i1} &= \begin{cases}
            1 &\quad \text{$i=1$ or $i=n$} \\
            0 &\quad \text{otherwise}
        \end{cases}\\
        \Ccal_{ij} &= \begin{cases}
            1 \quad \text{$j=1$ or $j=n$} \\
            0 \quad \text{otherwise}
        \end{cases},\\
        \Dcal &= \begin{bmatrix}
            1
        \end{bmatrix}
    \end{aligned}
    \end{equation}
\end{lemma}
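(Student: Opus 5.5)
The plan is to prove the lemma with an explicit construction rather than through the polynomial system \eqref{equ:the_equivalent_multvar_poly_eqns}. Since $\Dcal=[1]$, the feedthrough term $D=\lim_{s\to\infty}\mbf{K}(s)$ is always allowed, so it suffices to realize the strictly proper part $\mbf{G}(s)=\mbf{K}(s)-D$. By the standing assumption this part has simple poles $p_1,\dots,p_n$, so by Lemma \ref{thm:existence_of_parallel_realization} we can write $\mbf{G}(s)=\sum_{k=1}^n r_k/(s-p_k)$ with all $r_k\neq 0$. The support is an ``arrow'': state $1$ is a hub coupled to every other state, and states $2,\dots,n$ are leaves coupled only to the hub and to themselves. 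Input and output may touch only the hub and leaf $n$.

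\textbf{Step 1: decouple the last leaf.} Set $A_{1n}=A_{n1}=0$ and $A_{nn}=p_n$, and choose $B_{n1},C_{1n}$ with $C_{1n}B_{n1}=r_n$. Then state $n$ contributes exactly $r_n/(s-p_n)$. What remains is to realize $\mbf{H}(s)=\sum_{k<n} r_k/(s-p_k)$ with states $1,\dots,n-1$, using input and output through the hub only. We take $B_{11}=1$ and $C_{11}=\gamma$, where $\gamma$ is the leading coefficient of the numerator of $\mbf{H}$.

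\textbf{Step 2: realize $\mbf{H}$ on a hub-and-leaves graph.} Write $\mbf{H}=\gamma N(s)/Q(s)$, with $Q=\prod_{k<n}(s-p_k)$ and $N$ monic. For an arrow matrix with hub diagonal $d_1$, leaf diagonals $d_j$, and products $w_j=A_{1j}A_{j1}$, the Schur complement gives
\[
e_1^\top (sI-A)^{-1}e_1 = \Bigl(s-d_1-\sum_{j=2}^{n-1}\frac{w_j}{s-d_j}\Bigr)^{-1}.
\]
We therefore need $Q/N = s-d_1-\sum_j w_j/(s-d_j)$. Polynomial division gives $Q=(s-d_1)N-R$ with $\deg R<\deg N$, and this determines $d_1$. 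We then take the $d_j$ to be the roots of $N$, and the $w_j$ to be the residues of $R/N$ at those roots. We realize each $w_j$ by setting $A_{1j}=w_j$ and $A_{j1}=1$. Coprimeness of $N$ and $Q$ ensures $w_j\ne0$, so the realization has exactly the prescribed poles. Adding back $D$ and leaf $n$ then yields a realization of $\mbf{K}$ satisfying $S$. If $\mbf{K}$ has fewer than $n$ poles, we realize it on the smaller arrow support and pad with decoupled states carrying zero input and output gain. Strictly, minimality is lost in that case, but the definition of generally realizing only asks that $\mbf{K}$ be realizable on $S$.

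\textbf{Main obstacle.} Step 2 works directly only when $N$ has exactly $n-2$ distinct roots. Two degeneracies can break it:
\begin{itemize}
\item[(i)] $\deg N<n-2$, in which case some leaves are unused and the pole count falls short;
\item[(ii)] $N$ has repeated roots, in which case the sum $\sum_j w_j/(s-d_j)$ cannot produce a multiple pole of $R/N$.
\end{itemize}
My first remedy is to exploit the freedom in Step 1. We may choose which pole $p_k$ to split off, and, more generally, we need not decouple leaf $n$ completely: keeping $A_{1n},A_{n1}\neq0$ and using both entries of $B$ and $C$ adds terms proportional to $r_{n}(s)$ and $r_n(s)^2$ multiplied by $1/\Delta$. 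This gives extra continuous parameters that perturb the effective numerator $N$. The claim to verify is that for a generic choice of these parameters $N$ has full degree and simple roots, since the bad set is a proper algebraic subset of the parameter space. This genericity argument is the step I expect to require the most care, particularly for low $n$, where there are few free parameters.
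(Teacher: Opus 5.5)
\textbf{There is a genuine gap: the degenerate cases are not proved.} Your Steps 1--2 are correct when the numerator of $\mbf{H}$ has exactly $n-2$ distinct roots. But the degenerate cases (i)--(ii) are where the real content lies, and you cover them only with an assertion. Saying the bad set is ``a proper algebraic subset'' presupposes what must be shown, namely that some admissible parameter value is good. You also never write down the effective numerator once leaf $n$ is coupled. Coupling it changes the spectrum, so the $d_1,w_j$ of Step 2 cannot be reused.

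Re-choosing the split-off pole does not suffice either. For $n=4$ take
\[
\mbf{K}(s)=\frac{3s^2}{s^3-1}-\frac{2}{s-2},
\]
with residues $1,1,1$ at the cube roots of unity and $-2$ at $2$. Splitting off a cube root of unity leaves residues summing to $0$ (your case (i)). Splitting off $2$ leaves $\mbf{H}=3s^2/(s^3-1)$, whose numerator has a double root (case (ii)). A hub-only arrow always gives relative degree one and a numerator with simple roots, so all four choices fail. The coupled leaf is therefore indispensable.

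\textbf{How to close the gap.} Let $d_j=A_{jj}$, $w_j=A_{1j}A_{j1}$, $B=\beta_1e_1+\beta_ne_n$ and $C=\gamma_1e_1^\top+\gamma_ne_n^\top$. Eliminating the leaves gives
\[
C(sI-A)^{-1}B=\frac{\gamma_n\beta_n}{s-d_n}+\frac{L_1(s)L_2(s)P(s)}{(s-d_n)\det(sI-A)},
\]
where $P=\prod_{j=2}^{n-1}(s-d_j)$, $L_1=\gamma_1(s-d_n)+\gamma_nA_{n1}$ and $L_2=\beta_1(s-d_n)+\beta_nA_{1n}$. Suppose $\det(sI-A)=\chi:=\prod_k(s-p_k)$ and $\mbf{G}=N/\chi$. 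Then realizing $\mbf{G}$ is equivalent to the polynomial identity $L_1L_2P=F_c:=(s-d_n)N+c\chi$ with $c=-\gamma_n\beta_n$. Your Step 1 is the special case $d_n=p_n$, which pins $c=-r_n$.

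Instead, fix any $d_n\notin\{p_k\}$ and let $c$ vary. The roots of $F_c$ solve $-(s-d_n)N(s)/\chi(s)=c$, a nonconstant degree-$n$ rational function with finitely many critical values. Hence for all but finitely many $c\neq0$, $F_c$ has degree $n$ and $n$ distinct roots. None of them equals $d_n$ or any $p_k$, since $F_c(d_n)=c\chi(d_n)\neq0$ and $F_c(p_k)=(p_k-d_n)N(p_k)\neq0$.

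Take $n-2$ of these roots as $d_2,\dots,d_{n-1}$ and the other two as the roots of $L_1,L_2$. Partial fractions of $\chi/\prod_{j=2}^{n}(s-d_j)$ then give $d_1$ and nonzero $w_j$. Evaluating the identity at $s=d_n$ shows that the $A_{1n},A_{n1}$ read off from $L_1(d_n)$, $L_2(d_n)$ and $\gamma_n\beta_n=-c$ automatically satisfy $A_{1n}A_{n1}=w_n$. This covers every $\mbf{K}$ with simple poles, with no case split.

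\textbf{Comparison with the paper.} The paper's proof applies an explicit similarity $T_0$ to the modal realization, with parameters only $c_{i+1}/c_n$ and $-1$. So $T_0$ does not depend on the poles, and your data-dependent route is the right one. To see why, take any pole-independent $T$ and write $TA_pT^{-1}=\sum_kp_kt_ks_k^\top$, with $t_k$ the columns of $T$ and $s_k^\top$ the rows of $T^{-1}$. Each rank-one term must then respect the arrow pattern by itself. Together with $\sum_kt_ks_k^\top=I$ and the required zeros of $T\mathbf{1}$ and $C_pT^{-1}$ in positions $2,\dots,n-1$, this is contradictory once $n\ge 4$. The leaf diagonals must depend on the data, as they do in your construction.
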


\begin{proof}
    As $\Dcal = 1$, we only need to consider the strictly proper part of $\mbf{K}(s)$. Any proper $\mbf{K}(s)$ with simple poles can be expressed as: $\mbf{K}(s) = \sum_{i=1}^n\frac{c_i}{s-p_i}+d$. A minimal modal realization is:
    \begin{equation}
    \begin{aligned}
        A_p &= \begin{bmatrix}
            p_1 \\ & \ddots \\ && p_n
        \end{bmatrix},
        B_p = \begin{bmatrix}
            1 \\ \vdots \\ 1
        \end{bmatrix},\\
        C_p &= \begin{bmatrix}
            c_ 1 & \cdots & c_n
        \end{bmatrix},
        D_p = \begin{bmatrix}
            d
        \end{bmatrix}
    \end{aligned}
    \end{equation}
    Define $T_0 \coloneq\prod_{i=1}^{2(n-1)}E_i$ with 
    \begin{equation}
        E_i = \begin{cases}
            I_n+x_i\delta_{n,i+1} \quad i = 1,\cdots,n-2\\
            I_n+x_i\delta_{i+1,1} \quad i = n-1,\cdots,2(n-2)
        \end{cases}
    \end{equation}

    Since the modal realization is a minimal realization and the system is SISO, $c_n\neq 0$. Then, we pick:
    \begin{equation}
        x_i = \begin{cases}
            \frac{c_{i+1}}{c_n},&\quad i = 1,\cdots,n-2 \\
            -1,&\quad i = n-1,\cdots,2(n-2)
        \end{cases}
    \end{equation}

    Consider $(A_c,B_c,C_c,d)$ where $A_c = T_0A_pT_0^{-1}$, $B_c = T_0B_p$, $C_c = CT_0^{-1}$; this is a realization of $\mbf{K}(s)$ which satisfies $S$. Therefore, any $\mbf{K}(s)$ is realizable on $S$.
\end{proof}

For Problem \ref{pb:complex_realization}, if the provided support $S$ is generally realizing, then solutions to the problem trivially exist. Determining whether a given support $S$ is generally realizing is beyond the scope of this work; we next return to our original problems, which have now been shown to be equivalent to solving systems of multivariate polynomial equations.

\section{Solving multivariate polynomial equations with Gr\"obner basis} \label{sec:grobner}

In the previous section, we demonstrated that solving Problems \ref{pb:complex_realization} and \ref{pb:real_realization} are equivalent to finding solutions to multivariate polynomial equations. This has been studied in the field of algebraic geometry \cite{cox1997ideals,cox1998using,lasserre2008semidefinite}. In this section, we leverage these results to introduce algorithms that compute complex- and real-valued solutions to multivariate polynomial equations, which will then be used to solve Problems \ref{pb:complex_realization} and \ref{pb:real_realization}, respectively. For the duration of the section, the number of variables is denoted by $k\in\N$. As an example, for the polynomial equations generated by \eqref{equ:the_equivalent_multvar_poly_eqns}, $k = n(n-1)$.

\subsection{Computing complex-valued solutions}

We introduce theory on the existence of complex-valued solutions to multivariate polynomial equations and propose Algorithm \ref{alg:comp_solns} that computes them. This algorithm is guaranteed to solve Problem \ref{pb:complex_realization} if solutions exist. 

Given $H\coloneq\{h_1,\cdots,h_{\eta}\}$ where $h_i\in\C[x_x,\cdots,x_{k}]$. The \textbf{ideal}\cite[Chapter 1.4]{cox1997ideals} generated by $h_i$ written as $I = \abrac{h_1,\cdots,h_{\eta}}$ is the set: 
\begin{equation}
\begin{aligned}
    I = \{g\in\C[x_1,\cdots,x_{k}]\mid g = \sum_{i=1}^\eta{f_ih_i},\\ f_i\in\C[x_1,\cdots,x_{k}]\}
\end{aligned}
\end{equation}

Ordering of monomials, which determines the relative order of two arbitrary monomials $x^\alpha$ and $x^\beta$, is crucial for polynomial operations. The algorithms applied in our paper rely on lexicographic order \cite[Chapter 2.2]{cox1997ideals}. A key property of such an order is that relative ordering $x_1\succ x_2\succ\cdots x_{k}$ holds.

In a polynomial $f = \sum_{\alpha\in U}c_\alpha x^\alpha\in\C[x_1,\cdots,x_{k}]$ with $U$ being the set of possible degrees of $x$, the \tbf{leading term} is $c_\beta x^\beta$ where $\forall \alpha\in U\backslash \{\beta\}$, $x^\beta\succ x^\alpha$. The leading term is denoted by $\text{LT}(f)$. 
 
For a fixed monomial ordering, a \tbf{Gr\"obner basis} \cite[Chapter 2.5]{cox1997ideals} of an ideal $I$ is a set of polynomials $G = \{g_1,\cdots,g_\zeta\}$ such that the ideal generated by leading terms of $g_j$, written as $\abrac{\text{LT}(g_1),\cdots,\text{LT}(g_\zeta)}$ equals the ideal generated by the set of leading terms for nonzero polynomials in $I$, written as $\abrac{\text{LT}(I)}$. When $G$ is a Gr\"obner basis, $\abrac{g_1,\cdots,g_\zeta} = I$. A \tbf{reduced Gr\"obner basis} is a special type of Gr\"obner basis with minimal size (i.e., $\abs{G}$). It is unique for a given monomial ordering. Intuitively, we can think of the process of converting a Gr\"obner basis to a reduced Gr\"obner basis as being similar to converting a matrix to its reduced echelon form.

Computing the Gr\"obner basis is desirable because we can use it to determine existence of solutions to multivariate polynomial equations and and compute them when they exist. For an ideal $I = \abrac{h_1,\cdots,h_\eta}$ generated by polynomials $h_i$, $i = 1,\cdots,\eta$ and $G = \{g_1,\cdots,g_\zeta\}$ being a Gr\"obner basis of $I$, solutions (possibly complex-valued) to $h_i = 0$, $i = 1,\cdots,\eta$ and $g_j = 0$, $j = 1,\cdots,\zeta$ are:

\begin{equation}
    V_\C(I) \coloneq\{x\in\C^k\mid f(x) = 0,\quad \forall f\in I\}
\end{equation}

For the rest of the paper, we use $V_\C(I)$ to denote the set of complex-valued solutions. The existence of complex solutions is given by the following claim \cite[Chapter 4.1]{cox1997ideals}: 
\begin{equation}
    V_\C(I) = \emptyset\text{ if and only if reduced Gr\"obner basis of $I$ is }\{1\}
    \label{equ:existence_of_complex_solns}
\end{equation}

There are multiple software tools to compute Gr\"obner bases. For example, in MATLAB, the command \texttt{gbasis} returns the reduced Gr\"obner basis of a given ordering. If the reduced Gr\"obner basis is $\{1\}$, no solutions exists. Otherwise, at least one solution (possibly complex-valued) exists.

Given a Gr\"obner basis, we can find solutions using the elimination theorem \cite[Chapter 3.1]{cox1997ideals}. When $G^l = \{g_1,\cdots,g_\zeta\}$ is a Gr\"obner basis of and ideal $I$ with respect to a term order satisfying $\forall \tau\neq l$, $x_\tau\succ x_{k}$, $l = 1,\cdots,k$, the set
\begin{equation}
    G^l_{l} \coloneq G^l\cap \Bbbk[x_{l}]  \label{equ:Gll}
\end{equation}
are polynomials that constraints $x_{l}$. Any $x_{l}$ that solves $g_j = 0$, $j = 1,\cdots,\zeta$ solves the polynomial equations generated by $G^l_l$. Leveraging this property, we can apply the following technique to find solutions. For $l = 1,\cdots,k$, generate $G_l^l$ and use polynomials in $G^l_l$ to find a solution candidate of $\tilde{x}_l$. When $G_l^l = \emptyset$, we can pick any $\tilde{x}_l\in\C$. We then plug $\tilde{x}_l$ into $G$ and re-compute the reduced Gr\"obner basis. When the new reduced Gr\"obner basis is $\{1\}$, we need to find a new candidate of $\tilde{x}_l$. Otherwise, we proceed. Properties of the reduced Gr\"obner basis guarantees that we can always find a valid $\tilde{x}_l$. This technique is summarized in Algorithm \ref{alg:comp_solns}. Given a set of multivariate polynomial equations, it is guaranteed to return a solution when solutions exists. When solution does not exists, it returns ``infeasible".

\begin{algorithm}[h]
\textbf{Input:} $H \coloneq \{h_1,\cdots,h_\eta\}$ \\
\textbf{Output:} Solution $x = (x_1,\cdots,x_k)\in\C^k$ or \tbf{Infeasible}.
\caption{Computing complex-valued solutions to a set of multivariate polynomial equations}
\label{alg:comp_solns}
\begin{algorithmic}[1]
    \State Compute reduced Gr\"obner basis $G$ of $H$
    \For{$j = k$ to $1$}
        \If{$G = \{1\}$}
            \State \Return $\tbf{Infeasible}$.
        \EndIf
        \State $G_j \gets G\cap\C[x_j]$
        \If{$G_j = \emptyset$}
        \State Randomly pick $\tilde{x}_j\in\C$.
        \Else
        \StateWrap{Find $\tilde{x}_j$ that solves equations generated by $G_j$} 
        \EndIf
        \StateWrap{Substitute $\tilde{x}_j$ into $G$ to yield $G'$} 
        \StateWrap{Compute reduced Gr\"obner basis $\tilde{G}$ of $G'$}
        \While{$\Tilde{G} = \{1\}$}
        \StateWrap{Repeat Line 6-11 to generate another $\tilde{x}_j$}
        \EndWhile
        \State $x_j \gets \tilde{x}_j$, $G \gets \tilde{G}$.
    \EndFor
    \State \Return $x$
\end{algorithmic}
\end{algorithm}
\vspace{-8pt}

\subsection{Computing real-valued solutions}

Next, we introduce theory on the existence of real-valued solutions. 
We want to solve for
\begin{equation}
    h_i(x) = 0,\quad i = 1,\cdots,\eta,\quad h_i\in\R[x_1,\cdots,x_{k}]  \label{equ:polynomial_eqns_to_solve}
\end{equation}

The set of real-valued solutions to \eqref{equ:polynomial_eqns_to_solve} is:
\begin{equation}
    V_\R(I) \coloneq V_\C(I)\cap \R^{k}
\end{equation}

In $\R[x_1,\cdots,x_k]$, monomial $x^\alpha\coloneq x_1^{\alpha_1}x_2^{\alpha_2}\cdots x_k^{\alpha_k}$ is of order $\alpha = (\alpha_1,\alpha_2,\cdots,\alpha_k)\in \N^k$. For generic polynomials, we can impose some ordering and rewrite coefficients in a vector. As an example, for $f(x)= x_1^2+x_1x_2-4\in\R[x_1,x_2]$:
\begin{equation}
    f(x) = \begin{bmatrix}
        -4 & 0 & 0 & 1 & 1 & 0 & \cdots
    \end{bmatrix}\begin{bmatrix}
        1 \\ x_1 \\ x_2 \\ x_1^2 \\ x_1x_2 \\ x_2^2 \\ \vdots
    \end{bmatrix}\coloneq FX
    \label{equ:vector_representation_of_polynomials}
\end{equation}

where $f(x)$'s coefficient vector $F$ is indexed as:
\begin{equation}
    F = \begin{bmatrix}
        f_{0,0} & f_{1,0} & f_{0,1} & f_{2,0} & f_{1,1} & f_{0,2} & \cdots \label{equ:polynomial_indexes}
    \end{bmatrix}
\end{equation}
with $f_{\alpha}$ being the coefficient for $x^\alpha$, $\alpha\in\N^k$. In this example, $k=2$. For larger $k$, we can index $F$ in a similar pattern. When $f(x)$ is a finite-order polynomial, later terms of $F$ are zero; we can truncate these without losing information. For example, in \eqref{equ:vector_representation_of_polynomials}, $\alpha = (\alpha_1,\alpha_2)$, $f_\alpha = 0$ for all $\alpha_1+\alpha_2\geq 3$.

To find real solutions, we also need truncated moment matrices. A \tbf{moment matrix} $M(\gamma)$ is an infinite matrix defined as:
\begin{equation}
    \left[M(\gamma)\right]_{\alpha,\beta} \coloneqq \gamma_{\alpha+\beta},\quad \alpha,\beta\in\N^k
\end{equation}

A truncation $M_t(\gamma)$ of $M(\gamma)$ is a submatrix of $M(\gamma)$ that keeps $\left[M(\gamma)\right]_{\alpha,\beta}$ with $\sum_{i=1}^n\alpha_i\leq t$ and $\sum_{i=1}^n\beta_i \leq t$. As an example, when $n=2$:

\begin{equation}
    M_1 (\gamma) = \begin{bmatrix}
        \gamma_{0,0} & \gamma_{1,0} & \gamma_{0,1} \\
        \gamma_{1,0} & \gamma_{2,0} & \gamma_{1,1} \\
        \gamma_{0,1} & \gamma_{1,1} & \gamma_{0,2} 
    \end{bmatrix}
\end{equation}

For polynomials of the form $f(x) = \sum_{\beta}f_\beta x^\beta$, we define $f\gamma$ by:
\begin{equation}
    (f\gamma)_\alpha = \sum_{\beta}f_\beta \gamma_{\alpha+\beta},\quad \alpha\in\N^k
\end{equation}
Then, moment matrix $M(f\gamma)$ is $((f\gamma)_{\alpha+\beta})_{\alpha,\beta\in\N^k}$. Next, define variables:
\begin{equation}
    d_j \coloneq \lceil \up{deg}(h_j)/2\rceil,\quad d\coloneq\underset{j = 1,\cdots,\eta}{\max}d_j \label{equ:def_of_d}
\end{equation}

Then, existence of real-valued solutions to the polynomials can be determined using moment matrices in the following semidefinite program (SDP):
\begin{equation}
\begin{aligned}
    \min_\gamma \quad 1 \quad s.t.\quad& \gamma_0 = 1,\quad M_t(\gamma)\succeq 0, \\ &M_{t-d_j}(h_j\gamma)=0,\quad t\geq d\label{equ:feas_sdp}
\end{aligned}
\end{equation}
When this SDP is infeasible, i.e., $\exists t\geq d$ such that no suitable $\gamma$ can be found, then $V_\R(I) = \emptyset$ \cite{lasserre2008semidefinite}. When it is feasible, we can find $M_t(\gamma)$ which contains a truncated sequence of $\gamma$. Then,  for $1\leq s\leq t$, let $M_s(\gamma)$ be submatrices of $M_t(\gamma)$ with $\alpha$, $\beta$ satisfying $\sum_{i=1}^n\alpha_i\leq s$, $\sum_{i=1}^n\beta_i \leq s$. If $\exists 1\leq s\leq t$ s.t. $\rank M_s(\gamma) = \rank M_{s-1}(\gamma)$, we can compute solution candidates for \eqref{equ:polynomial_eqns_to_solve} and verify whether they are solutions. The method for computing solution candidates for a given $s$ is presented directly in \cite{lasserre2008semidefinite} and is quite notationally heavy, so we summarize the general idea here: first, locate a set of linearly independent columns of $M_s(\gamma)$, then extract submatrices of $M_s(\gamma)$ based on indices of linearly independent columns; invert one of these submatrices and perform matrix multiplication. Solution candidates of each $x_i$ are eigenvalues of the resulting matrix products.

If no solutions are found, we can iteratively increase $t$ and compute more solution candidates. When the polynomial equations have a finite number of solutions or no solutions, i.e., $\abs{V_\R(I)}<\infty$, this iteration is guaranteed to terminate, either with a solution or with a report of infeasibility if no solutions exist. However, termination fails to occur if $\abs{V_\R(I)} = \infty$. To handle this case, we stop iterations at $t = d + r$ for some parameter $r$. When $r$ is sufficiently large, it is more likely that terminating at $t=r$ implies $\abs{V_\R(I)} = \infty$; however, increasing $r$ also increases the time it takes to obtain a solution. This process is summarized in Algorithm \ref{alg:real_solns_base}.

\vspace{-3pt}
\begin{algorithm}[h]
    \textbf{Input:} $H \coloneq \{h_1,\cdots,h_\eta\}$, $r$  \\
    \textbf{Output:} Solution $x = (x_1,\cdots,x_k)\in\R^k$ or \tbf{Infeasible} or \tbf{Infinite Solutions Likely}.\\
    \vspace{-12pt}
    \caption{Computing real-valued solutions to a system of multivariate polynomial equations with finite solutions}
    \label{alg:real_solns_base}
    \begin{algorithmic}[1]
    \For{$t = d$ to $d+r$}
    \StateWrap{Run SDP \eqref{equ:feas_sdp} with polynomials from $H$}
    \If{SDP Infeasible}
    \State \Return \tbf{Infeasible}
    \EndIf
    \If{$\exists 1\leq s\leq t$ s.t. $\rank{M_s(\gamma)} = \rank{M_{s-1}(\gamma)}$}
    \State Extract solution candidates.
    \If{$\exists$ solution candidate $x$ solving $h_i =  0$}
    \State \Return $x$
    \EndIf
    \EndIf
    \EndFor
    \State \Return \tbf{Infinite Solutions Likely}
    \end{algorithmic}
\end{algorithm}
\vspace{-3pt}

When a system of multivariate polynomial equations has infinite solutions, then Algorithm \ref{alg:real_solns_base} cannot retrieve a solution. We now work towards a solution for this. First, we introduce the following notation:

\begin{definition}
    Given $V_\Bbbk(I)$, $\Bbbk = \R$ or $\C$, $S_\Bbbk^l(I)$ is the set of possible values for $x_l$, written as:
    \begin{equation}
        \begin{aligned}
            S_\Bbbk^l(I) &\coloneq \{x_l\mid x = (x_1,\cdots,x_l,\cdots,x_k),x\in V_\Bbbk(I)\} 
        \end{aligned}
    \end{equation}
    We also define these sets. $G_l^l$ was defined in \eqref{equ:Gll}
    \begin{equation}
        \kappa(I) \coloneq \{x_l\mid \abs{S_\R^l(I)} = \infty\}
    \end{equation}
    \begin{equation}
        \epsilon(I) \coloneq \{x_l\mid G_l^l = \emptyset\}
    \end{equation}
\end{definition}

One can observe that $\abs{V_\R(I)} = \infty \iff \abs{\kappa(I)}>0$. In this case, we want to assign valid values to some variables, plug them into $h_i = 0$ to yield $\tilde{h}_i$ such that $\tilde{I} = \abrac{\tilde{h}_1, \cdots, \tilde{h}_\eta}$ satisfies $\kappa(\tilde{I})$. In other words, we convert the system of polynomial equations with infinite solutions to a new system of polynomial equations with finite solutions; then, we can apply Algorithm \ref{alg:real_solns_base} on this new system. We now explain how to do this. As $V_\R(I)\subseteq V_\C(I)$, $S_\R^l(I)\subseteq S_\C^l(I)$. $x_l\in S_\C^l(I)\implies x_l\in\epsilon(I)$. Thus, $x\in\kappa(I)\implies x\in\epsilon(I)$. We want to pick an unknown in $\kappa(I)$. To do so, we select $x_l\in\epsilon(I)$, assign a random value $\tilde{x_l}$ and plug it into $G^l$, yielding ${G^l}'$. We compute the reduced Gr\"obner basis of ${G^l}'$, denoted by $\tilde{G^l}$. At this point, we have two options that trade off rigorousness and efficiency. 


The more rigorous approach is to run Algorithm \ref{alg:real_solns_base} with $\tilde{G}^l$. If the algorithm returns ``Infeasible", we select a new value of $\tilde{x}_l$ and try again. If the algorithm keeps returning ``Infeasible" after $r$ attempts, we conclude that $x_l\notin \kappa(I)$ and start again, picking $x_\tau\in\epsilon(I)$, $\tau\neq l$. If Algorithm \ref{alg:real_solns_base} returns ``infinite solutions", we need to assign values to more variables; we set $G = \tilde{G^l}'$ and re-compute $\epsilon(I)$. If at any point, Algorithm \ref{alg:real_solns_base} returns a solution, it is a solution to $h_i = 0$ and we are done. The downside of this Algorithm is that resulting calls to the SDP \eqref{equ:feas_sdp} contain very large moment matrices. To avoid this, we propose an alternative approach.


A less rigorous but more efficient approach is to assume $\abs{V_\C(I)} =\infty$ if and only if $\abs{V_\R(I)} = \infty$. Under this assumption, we can assign random real values to variables and update the Gr\"obner basis until complex solutions to the equations of remaining variables is finite. Then, we run Algorithm \ref{alg:real_solns_base} to find values of the remaining variables. If the algorithm returns ``Infeasible", assign different random values in the first step and try again. Algorithm \ref{alg:real_solns} implements this process. We note that although this assumption does not generally hold, in all simulations and tests we performed related to this paper, we never encountered a scenario in which the assumption was violated. 


\begin{algorithm}[h]
    \caption{Computing real-valued solutions to a system of multivariate polynomial equations}
    \textbf{Input:} $H = \{h_1,\cdots,h_\eta\}$, $r$ \\
    \textbf{Output:} Solution $x = (x_1,\cdots,x_k)\in\R^k$ or \tbf{Infeasible}.\\
    \vspace{-12pt}
    \begin{algorithmic}[1]
    \State Compute reduced Gr\"obner basis $G$ of $H$
    \If{$G = \{1\}$}
    \State \Return \tbf{Infeasible}
    \EndIf
    \State a2Out = Algorithm \ref{alg:real_solns_base}($H$, $r$)
    \If{a2Out != \textbf{Infinite Solutions Likely}}
    \State \Return a2Out
    \EndIf
    \While{true}
    \State $\hat{G}\gets G$.
    \State Compute reduced Gr\"obner basis $G_l^l$ of $G$.
    \While{Exists $l$ s.t. $G_{l}^{l} = \emptyset$}
    \State Assign a random value $\tilde{x}_{l}$ to $x_{l}$.
    \State Substitute $\tilde{x}_l$ into $G^l$ which yields ${G^l}'$
    \State Compute reduced Gr\"obner basis $\tilde{G}^l$ of ${G^l}'$
    \While{$\tilde{G}^l=\{1\}$}
    \State Repeat Lines 10-14 to generate another $\tilde{x}_l$
    \EndWhile
    \State $G \gets \tilde{G}^l$.
    \EndWhile
    \State a2Out2 = Algorithm \ref{alg:real_solns_base}($G$,$r$).
    \If{a2Out2 = $x$}
    \State \Return $x$
    \Else
    \State $G\gets \hat{G}$
    \EndIf
    \EndWhile
    \end{algorithmic} 
    \label{alg:real_solns}
\end{algorithm}

\textbf{Remark:} While conducting numerical experiments, we observed that in all cases where complex solutions existed, real solutions also existed; in these cases, we can very likely extract a real solution using Algorithm \ref{alg:comp_solns} directly. However, to guarantee retrieval of a real solution, we need to run Algorithm \ref{alg:real_solns}.

\subsection{Finding sparse realizations}

With algorithms to compute solutions to multivariate polynomial equations, we preset Algorithm \ref{alg:find_realizations} to solve our original problems. Set input variable \textbf{complex} to true to solve Problem \ref{pb:complex_realization} and false to solve Problem \ref{pb:real_realization}. Note that in the previous section (Section \ref{sec:equivalence}), we only derived a \textit{sufficient} but not necessary criterion for the existence of solutions to Problem \ref{pb:real_realization}; this criterion was then translated into a system of multivariate polynomial equations. Thus, even if we are unable to retrieve real-valued solutions, this does not definitively mean that no solutions exist; in this case, the algorithm reports \textbf{No Solutions Found} as opposed to \textbf{Infeasible}. Also, this algorithm nominally iterates over all members $T_p \in \mathcal{P}_p$; however, in practice, setting $T_p = I_p$ often suffices (this is the case for all simulations in the next section). We can also truncate this iteration for improved efficiency.

\begin{algorithm}[h]
    \textbf{Input:} $\mbf{K}(s)$, $S = (\Acal,\Bcal,\Ccal,\Dcal)$, Boolean variable \tbf{complex} \\
    \textbf{Output:} State-space realization $(A,B,C,D)$ of $\mbf{K}(s)$ satisfying $S$ (real-valued if \tbf{complex} $=$ false) or \tbf{Infeasible} or \tbf{No Solutions Found}.\\
    \vspace{-12pt}
    \caption{Computing sparse realizations}
    \label{alg:find_realizations}
    \begin{algorithmic}[1]
    \State Construct modal realization $(A_p,B_p,C_p,D_p)$ of $\mbf{K}(s)$.
    \If{$D_p$ does not satisfy $S$}
    \State \Return \tbf{Infeasible}.
    \EndIf
    \State Find $\Pcal_p$.
    \While{$\exists T_p\in\Pcal_p$ not used}
    \State Construct $(\tA,\tB,\tC,\tD)$ with new $T_p$. 
    \StateWrap{Extract polynomial equation constraints $H$ by \eqref{equ:the_equivalent_multvar_poly_eqns} from $(\tA,\tB,\tC,\tD)$ based on $(\Acal,\Bcal,\Ccal,\Dcal)$.}
    \If{\textbf{complex} $=$ false}
    \State Run Algorithm \ref{alg:real_solns}.
    \Else
    \State Run Algorithm \ref{alg:comp_solns}.
    \EndIf
    \If{Algorithm returned a solution $x$}
    \StateWrap{Substitute $x$ into $(\tA,\tB,\tC,\tD)$.}
    \State \Return $(\tA,\tB,\tC,\tD)$.
    \EndIf
    \EndWhile
    \If{\textbf{complex} $=$ true}
    \State \Return \tbf{Infeasible}
    \Else
    \State \Return \tbf{No Solutions Found}
    \EndIf
    \end{algorithmic}
    
\end{algorithm}

\section{Numerical Examples} \label{sec:simulations}

We now apply Algorithm \ref{alg:find_realizations} to specific numerical instances of Problems \ref{pb:complex_realization} and \ref{pb:real_realization}. 
All code required to reproduce our results can be found at \url{https://github.com/YaozDu/Sparse-LTI-system-realization-via-solving-multivariate-polynomial-equations}.

\begin{figure}[h]
    \centering
    \begin{subfigure}[b]{0.48\columnwidth}
        \centering
        \includegraphics[width=\columnwidth]{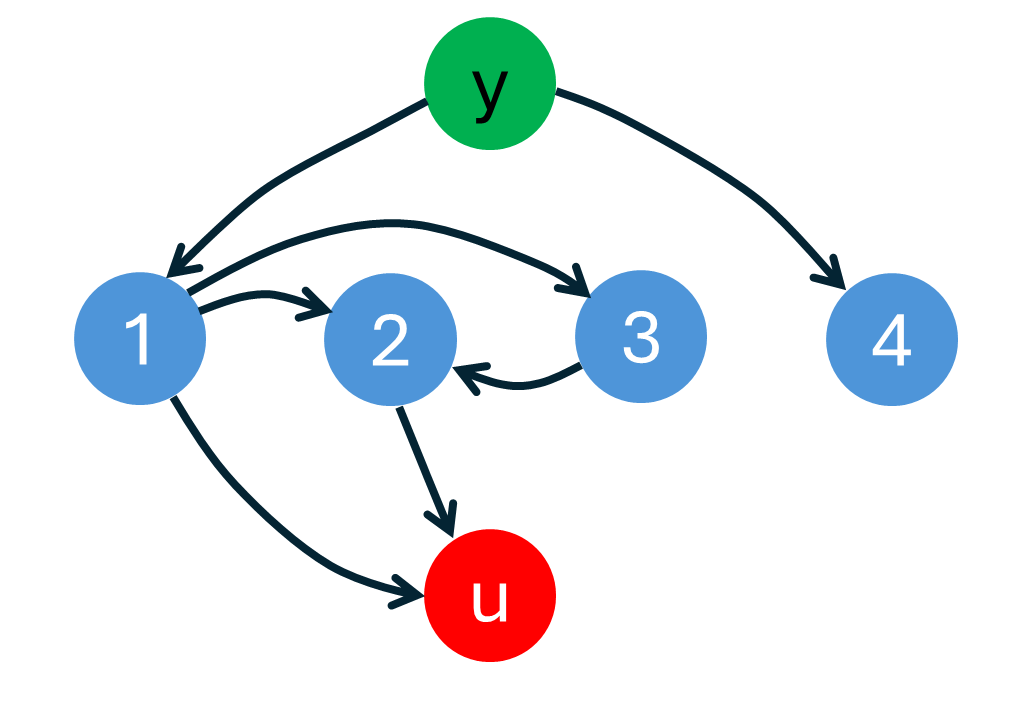}
        \caption{}
        \label{fig:1a}
    \end{subfigure}
    \hfill
    \begin{subfigure}[b]{0.48\columnwidth}
        \centering
        \includegraphics[width=\columnwidth]{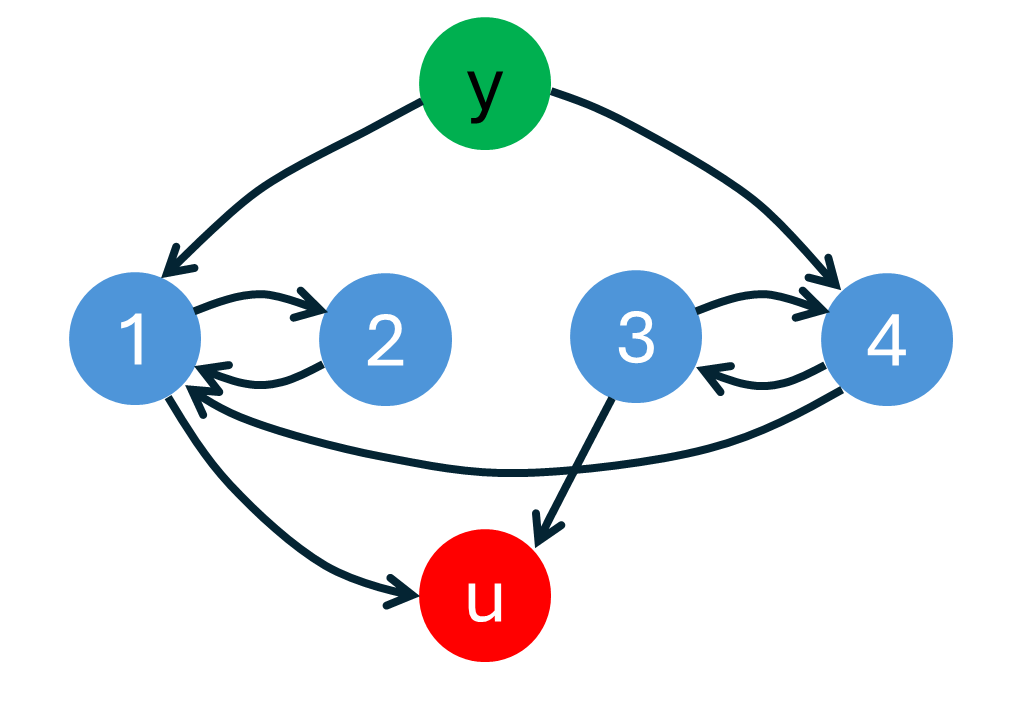}
        \caption{}
        \label{fig:1b}
    \end{subfigure}
    \caption{Controller structures from numerical examples}
    \label{fig:main}
    \vspace{-6pt}
\end{figure}

Consider a randomly generated unstable plant $\dot{x} = Fx + Gu, y = Hx$, where $x$ is the plant state and
\begin{equation}
\begin{aligned}
    F &= \begin{bmatrix}
        0.2  &  0.6  &  0.6  &  0.2 \\
        0.6  &  0.8  &  0.8  &   0.6 \\
         0   &  0.4  &  0.6  &  0.2 \\
        0.6  &  0.40   & 0.4    &   0
    \end{bmatrix},
    G = \begin{bmatrix}
        0.8 \\ 0.2 \\ 0.4 \\ 0.6
    \end{bmatrix} \\
    H &= \begin{bmatrix}
        0.4&   0.8&   0.4&   0.6
    \end{bmatrix}
\end{aligned}
\end{equation}

An output feedback controller for this system is:
\begin{equation}
    \dot{z} = (F+GK-LH)z+Ly,\quad u = Kz \label{equ:LQG}
\end{equation}
We obtain an LQG controller by setting $Q = 11I_4$, $R = 1$, $W = I_4$, $V = 1$ where $W$ and $V$ represents the covariance of process and sensor noise. Now, we convert this controller into a transfer function $\mbf{K}(s)$ via $\mbf{K}(s) = K(sI-F-GK+LH)^{-1}L$, and attempt to realize it on various supports. 

We first define the following support, which is depicted in Figure \ref{fig:1a}.
\begin{equation}
\begin{aligned}
    \Acal^{s} &= \begin{bmatrix}
        1   &  1    & 0   &  1\\
     1   &  1  &  0   &  0\\
     0   & 0   &  1  &   1\\
     0   &  0  &   1  &   1
    \end{bmatrix},
    \Bcal^{s} = \begin{bmatrix}
        1 \\ 0 \\ 0 \\ 1
    \end{bmatrix} \\
    \Ccal^{s} &= \begin{bmatrix}
        1 & 0 & 1 & 0
    \end{bmatrix},
    \Dcal^s = 0
\end{aligned}
\end{equation}
We now wish to find a real-valued realization of $\mbf{K}(s)$ that respects these supports, i.e., solve Problem \ref{pb:real_realization}. 
As previously described, we start with a modal realization $A_p, B_p, C_p, D_p$ for this controller then solve for similarity transform matrices.
This leads to a system of 11 polynomial equations with 12 variables, which has infinite real-valued solutions. Algorithm \ref{alg:find_realizations} returns the following realization:


\begin{equation}
    \begin{aligned}
        A &= \begin{bmatrix}
            -7.10  & -5.44  &  0 &165.23\\
  -27.28  & -2.41  & 0 &  0\\
    0  &  0 &  1.66  &-10.38\\
    0  &  0  & 0.36 &  -2.221
        \end{bmatrix},
    B = \begin{bmatrix}
        -7.30 \\ 0 \\ 0 \\ 0.17
    \end{bmatrix} \\
    C &= \begin{bmatrix}
        23.01 & 0 & 11.34 & 0
    \end{bmatrix}
    \label{equ:implementation_1}
    \end{aligned},
\end{equation}

Next, we consider a different support, which is depicted in Figure \ref{fig:1b}.
\begin{equation}
    \begin{aligned}
        \Acal^{s} &= \begin{bmatrix}
            1 & 0 & 0 & 1 \\ 1 & 1 & 1 & 0 \\ 1 & 0 & 1 & 0 \\ 0 & 0 & 0 & 1
        \end{bmatrix}, 
        \Bcal^{s} = \begin{bmatrix}
            1 \\ 0 \\ 0 \\ 1
        \end{bmatrix} \\
        \Ccal^{s} &= \begin{bmatrix}
            1 & 1 & 0 & 0
        \end{bmatrix},
        \Dcal^s = 0
        \label{equ:supp2}
    \end{aligned}
\end{equation}
We now wish to find a (possibly) complex-valued realization of $\mbf{K}(s)$ that respects these supports, i.e., solve Problem \ref{pb:complex_realization}. This leads to a system of 12 polynomial equations with 12 variables. Algorithm \ref{alg:find_realizations} returns the following realization:
\begin{equation}
    \begin{aligned}
        A &= \begin{bmatrix}
            -17.16   &  0  &  0 & -14.94 \\ -17.52  &  7.64 &  -0.57  & 0 \\
            0.47  &   0  & -0.48  &  0  \\   0   &   0     &    0  & 0.08
        \end{bmatrix},
        B = \begin{bmatrix}
            -8.53 \\ 0 \\ 0 \\ -0.76
        \end{bmatrix} \\
        C&= \begin{bmatrix}
            19.66 & -8.48 & 0 & 0
        \end{bmatrix}
    \end{aligned}
    \label{equ:implementation_2}
\end{equation}
Notice that, as previously remarked, this algorithm returned a real-valued solution even though it is not guaranteed to. 

We consider one final support, which is extremely similar to the first support but with one additional zero:
\begin{equation}
    \begin{aligned}
        A^{s} &= \begin{bmatrix}
            1 & 0 & 0 & 1 \\ 1 & 1 & 0 & 0 \\ 0 & 0 & 1 & 1 \\ 1 & 0 & 1 & 1 
        \end{bmatrix}, 
        B^{s} = \begin{bmatrix}
            1 \\ 0 \\ 0 \\ 1
        \end{bmatrix} \\
        C^{s} &= \begin{bmatrix}
            1 & 0 & 1 & 0
        \end{bmatrix}
    \end{aligned}
    \label{equ:infeasible_structrure}
\end{equation}
We attempt to solve Problem \ref{pb:complex_realization} on this support by again running Algorithm \ref{alg:find_realizations}. This leads to a system of 12 polynomial equations with 12 variables. This time, the algorithm returns ``Infeasible"; no complex or real-valued realizations exist.


It's interesting to note that many sparse realizations of this LQG controller exist when the original state-space realization of this controller in \eqref{equ:LQG} is dense.

\section{Conclusions and future work} \label{sec:conclusions}

We proposed a novel approach to realize rational proper transfer functions on sparse state space models, and validated our approaches via numerical examples. During the process of formulating and testing our algorithms, we also identified the following directions of future investigation:

\begin{enumerate}
    \item Deriving necessity and/or sufficiency results for general realizing supports. This is related to the notion of canonical microcircuits in neuroscience \cite{li2025toward}
    \item Improving scalability of Algorithm \ref{alg:comp_solns} and Algorithm \ref{alg:real_solns}
    \item Extending Algorithm \ref{alg:real_solns} to accommodate transfer functions with complex conjugate poles
\end{enumerate}

Finally, we observed throughout simulations that Algorithm \ref{alg:comp_solns} always returned real-valued solutions when they existed despite being formulated for complex-valued solutions. We are interested in identifying whether this property can be proven; if so, this would offer substantial computational speedups (compared to Algorithm \ref{alg:real_solns}).

\bibliographystyle{IEEEtran}
\bibliography{References}

@inproceedings{li2025toward,
  title={Toward neuronal implementations of delayed optimal control},
  author={Li, Jing Shuang},
  booktitle={2025 American Control Conference (ACC)},
  pages={2715--2721},
  year={2025},
  organization={IEEE}
}

@article{hanggi2014hypothesis,
  title={The hypothesis of neuronal interconnectivity as a function of brain size—a general organization principle of the human connectome},
  author={H{\"a}nggi, J{\"u}rgen and F{\"o}venyi, Laszlo and Liem, Franziskus and Meyer, Martin and J{\"a}ncke, Lutz},
  journal={Frontiers in human neuroscience},
  volume={8},
  pages={915},
  year={2014},
  publisher={Frontiers Media SA}
}

@article{sabuau2023network,
  title={Network realization functions for optimal distributed control},
  author={Sab{\u{a}}u, {\c{S}}erban and Speril{\u{a}}, Andrei and Oar{\u{a}}, Cristian and Jadbabaie, Ali},
  journal={IEEE Transactions on Automatic Control},
  volume={68},
  number={12},
  pages={8059--8066},
  year={2023},
  publisher={IEEE}
}

@inproceedings{scherer2000design,
  title={Design of structured controllers with applications},
  author={Scherer, Carsten W},
  booktitle={Proceedings of the 39th IEEE Conference on Decision and Control (Cat. No. 00CH37187)},
  volume={5},
  pages={5204--5209},
  year={2000},
  organization={IEEE}
}

@inproceedings{anderson2017structured,
  title={Structured state space realizations for SLS distributed controllers},
  author={Anderson, James and Matni, Nikolai},
  booktitle={55th annual Allerton conference on communication, control, and computing},
  pages={982--987},
  year={2017},
  organization={IEEE}
}

@inproceedings{li2020separating,
  title={Separating controller design from closed-loop design: A new perspective on system-level controller synthesis},
  author={Li, Jing Shuang and Ho, Dimitar},
  booktitle={2020 American Control Conference (ACC)},
  pages={3529--3534},
  year={2020},
  organization={IEEE}
}

@article{wenk2003parameter,
  title={Parameter optimization in linear systems with arbitrarily constrained controller structure},
  author={Wenk, C and Knapp, C},
  journal={IEEE Transactions on Automatic Control},
  volume={25},
  number={3},
  pages={496--500},
  year={2003},
  publisher={IEEE}
}

@article{lin2013design,
  title={Design of optimal sparse feedback gains via the alternating direction method of multipliers},
  author={Lin, Fu and Fardad, Makan and Jovanovi{\'c}, Mihailo R},
  journal={IEEE Transactions on Automatic Control},
  volume={58},
  number={9},
  pages={2426--2431},
  year={2013},
  publisher={IEEE}
}

@book{cox1997ideals,
  title={Ideals, varieties, and algorithms},
  author={Cox, David and Little, John and O'shea, Donal and Sweedler, Moss},
  year={1997},
  publisher={Springer}
}

@book{cox1998using,
  title={Using algebraic geometry},
  author={Cox, David A and Little, John and O’shea, Donal},
  year={1998},
  publisher={Springer}
}

@article{lasserre2008semidefinite,
  title={Semidefinite characterization and computation of zero-dimensional real radical ideals},
  author={Lasserre, Jean Bernard and Laurent, Monique and Rostalski, Philipp},
  journal={Foundations of Computational Mathematics},
  volume={8},
  number={5},
  pages={607--647},
  year={2008},
  publisher={Springer}
}

@article{de2000minimal,
  title={Minimal state-space realization in linear system theory: an overview},
  author={De Schutter, Bart},
  journal={Journal of computational and applied mathematics},
  volume={121},
  number={1-2},
  pages={331--354},
  year={2000},
  publisher={Elsevier}
}

\end{document}